\documentclass[aps,%
twocolumn,%
floatfix,%
pra,%
amsfonts,%
groupedaddress,%
superscriptaddress]{revtex4-2}%
\usepackage[utf8]{inputenc}
\usepackage[T1]{fontenc}
\usepackage{mathrsfs}       
\usepackage{mathtools}
\usepackage{amsmath}
\usepackage{amsfonts}       
\usepackage[titletoc,title]{appendix}
\usepackage{algorithm,algorithmic}

\usepackage{bbm}            
\usepackage{bm}             
\usepackage{amsthm}         
\usepackage{makecell}
\usepackage[dvipsnames]{xcolor}
\definecolor{beamer@blendedblue}{rgb}{0.2,0.2,0.7}
\usepackage{times}
\usepackage{caption}
\usepackage{subfigure}
\usepackage{booktabs} 
\usepackage{multirow} 

\usepackage[colorlinks=true,%
linkcolor=beamer@blendedblue,%
bookmarks=true,%
breaklinks=true,%
filecolor=beamer@blendedblue,%
anchorcolor=yellow,%
citecolor=beamer@blendedblue,%
urlcolor=beamer@blendedblue,%
pdfauthor={Congcong Zheng, Xutao Yu, Zaichen Zhang, Ping Xu, and Kun Wang},%
pdfsubject={Efficient Verification of Stabilizer Code Subspaces with Local Measurements},%
CJKbookmarks=true]{hyperref}

\usepackage{float}    
\usepackage{pifont}

\newtheorem{definition}{Definition}

\newtheorem{lemma}[definition]{Lemma}
\newtheorem{theorem}[definition]{Theorem}

\mathchardef\ordinarycolon\mathcode`\:
\mathcode`\:=\string"8000
\def\vcentcolon{\mathrel{\mathop\ordinarycolon}}
\begingroup \catcode`\:=\active
  \lowercase{\endgroup
  \let :\vcentcolon
  }

\DeclareFontFamily{U}{mathx}{\hyphenchar\font45}
\DeclareFontShape{U}{mathx}{m}{n}{<-> mathx10}{}
\DeclareSymbolFont{mathx}{U}{mathx}{m}{n}
\DeclareMathAccent{\widebar}{0}{mathx}{"73}

\newcommand{\wh}[1]{\widehat{#1}}

\newcommand{\ket}[1]{\vert{#1}\rangle}
\newcommand{\bra}[1]{\langle{#1}\vert}
\newcommand{\ketbra}[1]{\vert{#1}\rangle\!\langle{#1}\vert}
\newcommand{\braket}[2]{\langle #1\vert #2\rangle}
\newcommand\proj[1]{\vert{#1}\rangle\!\langle{#1}\vert}

\newcommand{\density}[1]{\mathscr{D}(#1)}

\DeclareMathOperator{\tr}{Tr}  
\newcommand{\1}{\mathbbm{1}}

%

\DeclareMathOperator{\rank}{rank}
\newcommand{\spn}[1]{{\rm span}\{#1\}}

\makeatletter
\newsavebox{\@brx}
\newcommand{\llangle}[1][]{\savebox{\@brx}{\(\m@th{#1\langle}\)}%
  \mathopen{\copy\@brx\kern-0.5\wd\@brx\usebox{\@brx}}}
\newcommand{\rrangle}[1][]{\savebox{\@brx}{\(\m@th{#1\rangle}\)}%
  \mathclose{\copy\@brx\kern-0.5\wd\@brx\usebox{\@brx}}}
\makeatother

\newcommand*{\cD}{\mathcal{D}}

\newcommand*{\cG}{\mathcal{G}}
\newcommand*{\cH}{\mathcal{H}}

\newcommand*{\cK}{\mathcal{K}}

\newcommand*{\cM}{\mathcal{M}}

\newcommand*{\cP}{\mathcal{P}}

\newcommand*{\cS}{\mathcal{S}}

\newcommand*{\cV}{\mathcal{V}}

\newcommand{\bZ}{\mathbb{Z}}

\definecolor{wildstrawberry}{rgb}{1.0, 0.26, 0.64}
\definecolor{googleblue}{HTML}{4285F4}
\definecolor{googlered}{HTML}{DB4437}
\definecolor{googleyellow}{HTML}{F4B400}
\definecolor{googlegreen}{HTML}{0F9D58}
\definecolor{klevinblue}{HTML}{002FA7}
\definecolor{tiffanyblue}{HTML}{0ABAB5}

\newcommand{\red}[1]{\textcolor{googlered}{#1}}
\newcommand{\blue}[1]{\textcolor{googleblue}{#1}}
\newcommand{\green}[1]{\textcolor{googlegreen}{#1}}
\newcommand{\yellow}[1]{\textcolor{googleyellow}{#1}}

\newcommand\prlsection[1]{\textit{\textbf{#1}}---}

\begin{document}

\newcommand{\thetitle}{{Efficient Verification of Stabilizer Code Subspaces with Local Measurements}}
\title{\thetitle}
\author{Congcong Zheng}%
\affiliation{State Key Lab of Millimeter Waves, Southeast University, Nanjing 211189, China}%
\affiliation{Frontiers Science Center for Mobile Information Communication and Security, Southeast University, Nanjing 210096, People's Republic of China}%

\author{Xutao Yu}%
\affiliation{State Key Lab of Millimeter Waves, Southeast University, Nanjing 211189, China}%
\affiliation{Frontiers Science Center for Mobile Information Communication and Security, Southeast University, Nanjing 210096, People's Republic of China}%
\affiliation{Purple Mountain Laboratories, Nanjing 211111, People's Republic of China}%

\author{Zaichen Zhang}%
\thanks{Corresponding author: \href{zczhang@seu.edu.cn}{
zczhang@seu.edu.cn}}%
\affiliation{Frontiers Science Center for Mobile Information Communication and Security, Southeast University, Nanjing 210096, People's Republic of China}%
\affiliation{Purple Mountain Laboratories, Nanjing 211111, People's Republic of China}%
\affiliation{National Mobile Communications Research Laboratory, Southeast University, Nanjing 210096, China.}%

\author{Ping Xu}%
\affiliation{Institute for Quantum Information \& State Key Laboratory of High Performance Computing,%
College of Computer Science and Technology, National University of Defense Technology,%
Changsha 410073, China}%
\affiliation{Hefei National Laboratory, Hefei 230088, China}%

\author{Kun Wang}
\thanks{Corresponding author: \href{nju.wangkun@gmail.com}{nju.wangkun@gmail.com}}%
\affiliation{Institute for Quantum Information \& State Key Laboratory of High Performance Computing,%
College of Computer Science and Technology, National University of Defense Technology,%
Changsha 410073, China}%

\begin{abstract}
We address the task of verifying whether a quantum computer, 
designed to be protected by a specific stabilizer code, 
correctly encodes the corresponding logical qubits. 
To achieve this, we develop a general framework for subspace verification and 
explore several stabilizer code subspaces of practical significance.
First, we present two efficient verification strategies for general stabilizer code subspaces, 
utilizing measurements of their stabilizer generators and stabilizer groups, respectively. 
Then, building on the observation that certain tests can be conducted in parallel when the subspace exhibits specific structural properties, 
we propose a coloring strategy tailored to graph code subspaces and 
an XZ strategy tailored to Calderbank-Shor-Steane (CSS) code subspaces. 
Compared to stabilizer-based strategies, these new strategies require significantly fewer measurement settings and consume fewer state copies, approaching near-global optimality. 
Notably, all the strategies employ a limited number of Pauli measurements, are non-adaptive, 
and work on mixed states,
enabling efficient experimental certification of both logical qubits and logical operations in noisy quantum computers. 
This work contributes to the first systematic study of efficient verification of stabilizer code subspaces with local measurements. 
\end{abstract}
\date{\today}
\maketitle

\section{Introduction}

Current quantum systems often fail to work as desired due to the presence of quantum noise, 
making it crucial to accurately describe the actual quantum system and correct errors in quantum information processing.
Quantum tomography is the standard approach for characterizing the entire quantum system~\cite{PhysRevLett.129.133601,PhysRevA.105.032427}, 
but it is highly resource-intensive and thus impractical for large-scale systems. 
In many practical applications, however, full characterization is unnecessary. 
Consequently, various resource-efficient methods have been developed to certify quantum systems~\cite{eisert2020quantum, kliesch2021theory}, 
including fidelity estimation~\cite{flammia2011direct, huang2020predicting, elben2020crossplatforma, zhu2022crossplatform, zheng2024crossplatform} and  
entanglement detection~\cite{zhou2020singlecopies, zhu2019optimal, brandao2004separable, elben2020mixedstate}. 
Among these methods, quantum state verification
~\cite{pallister2018optimala,wang2019optimala,yu2022statisticala,li2020optimal,dangniam2020optimal,chen2023memory,li2021verification}
is designed to verify whether quantum states are prepared as desired 
and has been experimentally validated~\cite{jiang2020towards,Zhang_2020,Xia_2022}.
Specifically, the verification strategies focus on using local operators 
and classical communication (LOCC) to verify entangled states. 

\begin{figure}[!htbp]
    \centering
    \includegraphics[width=0.9\linewidth]{./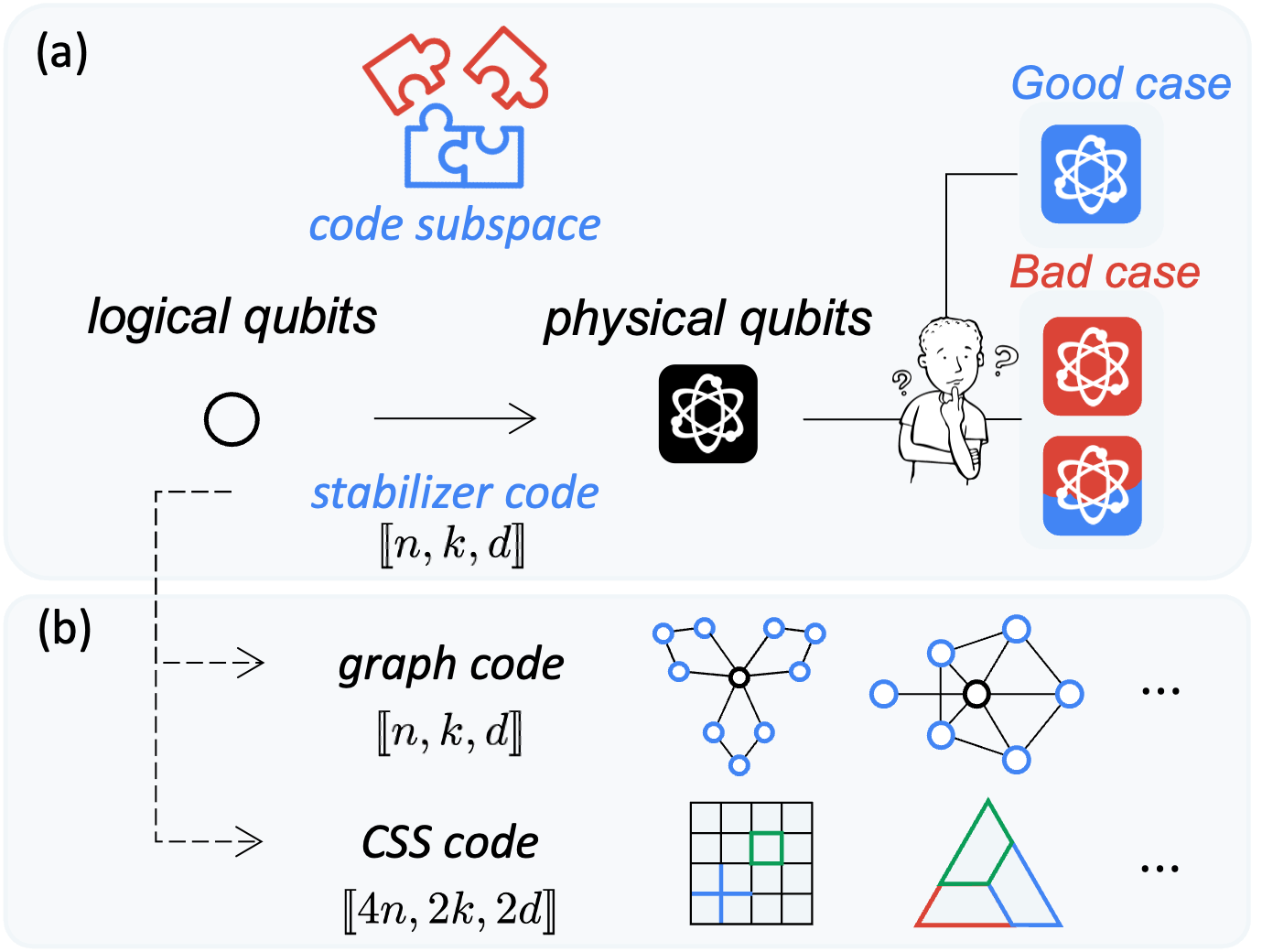}
    \caption{\raggedright 
    (a) Quantum code subspace verification. 
    A quantum code encodes logical information into a large code subspace, depicted as blue puzzles, while the red puzzles represent the complementary subspace. 
    Given states, either after encoding or after performing logical operators, 
    the goal of subspace verification is to distinguish between two cases: 
    (i) \textbf{Good case}: The states are in the target code subspace; 
    (ii) \textbf{Bad case}: The states lie outside the target code subspace, indicating an unreliable encoding process or the occurrence of a correctable error during computation. 
    (b) The stabilizer codes hierarchy under consideration. 
    An $[\![n,k,d]\!]$ stabilizer code is locally equivalent to an $[\![n,k,d]\!]$ graph code with a well-defined graph structure. 
    Furthermore, an $[\![n,k,d]\!]$ stabilizer code can be mapped onto a $[\![4n,2k,2d]\!]$ Calderbank-Shor-Steane (CSS) code
    possessing elegant structure, e.g., the toric code and the dual-containing code. 
    The left figure on the last line illustrates a $[\![32,2,4]\!]$ toric code, while the right figure shows the $[\![7,1,3]\!]$ Steane code, which belongs to the dual-containing code family. 
    }
    \label{fig:task description}
\end{figure}

On the other hand, Quantum Error Correction (QEC) is crucial for fault-tolerant quantum computation~\cite{gottesman1997stabilizer,shor1995scheme,1996multipleparticle,calderbank1996good,kitaev2003faulttoleranta,chen2021exponential,krinner2022realizing,takeda2022quantum,acharya2023suppressing,ni2023beating,bluvstein2024logical}, as it protects quantum information from noise. 
QEC achieves this by encoding logical qubits into physical qubits and constructing a corresponding code space, 
as illustrated in Fig.~\ref{fig:task description}(a). 
Intuitively, ensuring the effectiveness of the QEC process—including both encoding logical states and performing logical operations—is crucial for achieving fault tolerance in noisy, real-world quantum processors~\cite{ryan-anderson2021realizationa,rodriguez-blanco2021efficient,miao2023overcoming,zhang2012experimental}. 
A critical challenge arises: How can we certify the reliability of the encoding process in actual quantum devices? Can we verify whether errors have occurred during logical operations by certifying the corresponding code subspace rather than diagnosing syndromes? 
Recently, Baccari \textit{et al.}~\cite{baccari2020deviceindependent} partially addressed this certification problem 
by presenting the first self-testing protocols for the five-qubit and toric code subspaces.

In this work, we move from quantum state verification to \emph{quantum subspace verification}, 
aiming to determine whether a prepared quantum state belongs to a certain entangled subspace using local measurements.
Since entangled subspaces exhibit much more complex entanglement structures than entangled states, this task is highly non-trivial.
Note that quantum subspace verification has been previously mentioned in~\cite{zhu2024efficient, chen2023efficient}, 
where strategies were designed to verify ground states of local Hamiltonians. 
Here, we focus on the stabilizer code subspaces and the main contributions are summarized in Fig.~\ref{fig:task description}.
First, we establish a general framework of subspace verification. 
Then, we propose two efficient verification strategies for general stabilizer code subspaces. 
To further enhance verification efficiency, we examine two large classes of stabilizer codes: 
\emph{graph} codes and \emph{Calderbank-Shor-Steane} (CSS) codes, which are of practical interests.
For graph codes, we develop a new graph structure and introduce a coloring strategy for verifying the corresponding subspace. 
This coloring strategy requires fewer measurement settings than the aforementioned strategies. 
For CSS codes, we design an XZ strategy that requires only two measurement settings. 
Remarkably, all the proposed strategies employ a limited number of Pauli measurements, are non-adaptive, 
and work generally on mixed states,
enabling efficient experimental certification of logical qubits and logical operations in quantum computers.

\section{General framework of subspace verification}

Let $\cV:=\spn{\ket{\psi_j}}$ be the subspace spanned 
by a set of orthonormal states $\{\ket{\psi_j}\}_j$, 
where $\ket{\psi_j}$ lies in an $n$-partite Hilbert space $\cH=\otimes_{\ell=1}^n\cH_\ell$. 
By definition, $\cV\subseteq\cH$. 
Let $\density{\cV}$ be the set of density operators acting on $\cV$
and $\Pi:=\sum_j\proj{\psi_j}$ be the projector onto $\cV$. 
It should be noted that $\sigma\in\density{\cV}$ if and only if $\tr[\Pi\sigma] = 1$, 
as proved in Appendix~\ref{app:proof of subspace verification}. 
We can now formally define the quantum subspace verification task:
Given a quantum computer $\cD$ designed to produce states in $\cV$ and $N$ copies of states $\sigma_1, \sigma_2, \ldots, \sigma_N$ generated by $\cD$, 
the objective is to distinguish between the following two cases: 
(i) \textbf{Good}: for all $i\in[N]$, $\tr[\Pi\sigma_i] = 1$;
(ii) \textbf{Bad}:  for all $i\in[N]$, $\tr[\Pi\sigma_i] \leq 1 - \epsilon$ for some fixed $\epsilon$. 
In the following, we discuss how to complete this task. 

Suppose that we have access to a set of POVM elements $\cM$. 
Define a probability mass $\mu: \cM \to [0, 1]$, $\sum \mu(M) = 1$. 
For each state preparation, we pick a POVM element $M\in\cM$ with probability $\mu(M)$ 
and consider the corresponding two-outcomes POVMs $\{M, \1-M\}$, 
where $M$ has output ``pass'' and $\1-M$ has output ``fail''. We call $M$ a \textit{test operator}. 
The expected probability of a randomly generated quantum state $\sigma$ passing the test reads 
\begin{align}
    \Pr\left\{\text{``pass''}\vert \sigma\right\} 
    = \sum_{M\in\cM} \mu(M) \tr[M \sigma]
    = \tr[\Omega \sigma], 
\end{align}
where the \emph{verification operator} of this strategy is defined as 
\begin{align}\label{eq:verification-operator}
    \Omega := \sum_{M\in\cM} \mu(M) M. 
\end{align}

To satisfy the requirement of the verification task,
we impose two conditions on the verification operator $\Omega$: \emph{perfect completeness} condition and \emph{soundness} condition.
The perfect completeness condition requires that
$\forall\sigma\in\density{\cV}, \tr[\Omega \sigma] = 1$.
Intuitively, this condition guarantees that states in the target subspace $\cV$ can always pass the test.
This condition can be equivalently characterized using the projector $\Pi$ associated with $\cV$ as 
$\tr[\Omega\Pi]=\rank(\Pi)$, where $\rank(X)$ denotes the rank of the operator $X$.  
Detailed arguments and proofs is given in Appendix~\ref{app:proof of subspace verification}. 
Then, let's consider the soundness condition. 
We show that the worst-case passing probability $p(\Omega)$, defined as 
\begin{align}
    p(\Omega) := \max_{\sigma:\tr[\Pi\sigma]\leq 1-\epsilon}\Pr\{\text{``pass''}|\sigma\}, 
\end{align}
in the \textbf{Bad} case is uniquely 
determined by the largest eigenvalue of the projected effective verification operator,
as elucidated in the following theorem. 
\begin{theorem}
\label{theorem:worst-case passing probability}
It holds that
\begin{align}
    p(\Omega) 
= \max_{\sigma:\tr[\Pi\sigma]\leq1-\epsilon}\tr[\Omega\sigma]
= 1 - (1-\lambda_{\max}(\wh{\Omega}))\epsilon,
\end{align}
where $\wh{\Omega}:=(\1-\Pi)\Omega(\1-\Pi)$ is the 
projected effective verification operator 
and $\lambda_{\max}(X)$ denotes the maximal eigenvalue of the Hermitian operator $X$.
\end{theorem}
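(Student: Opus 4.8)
The plan is to exploit the operator constraints $0\le\Omega\le\1$ inherited from $\Omega=\sum_{M}\mu(M)M$ to reduce the constrained maximization to a transparent one-parameter problem. First I would extract the algebraic content of perfect completeness. Working in block form with respect to the splitting $\cH=\cV\oplus\cV^\perp$ induced by $\Pi$, the condition $\forall\sigma\in\density{\cV},\ \tr[\Omega\sigma]=1$ (equivalently $\tr[\Omega\Pi]=\rank(\Pi)$) together with $\Omega\le\1$ forces the restriction of $\Omega$ to $\cV$ to be the identity, i.e.\ $\Pi\Omega\Pi=\Pi$; this is immediate since $\Pi-\Pi\Omega\Pi\ge0$ has vanishing trace and hence is zero.

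The key step is to show that the off-diagonal block $B:=\Pi\Omega(\1-\Pi)$ vanishes, so that $\Omega$ decouples as $\Omega=\Pi+\wh{\Omega}$ with $\wh{\Omega}=(\1-\Pi)\Omega(\1-\Pi)$ supported entirely on $\cV^\perp$. I would obtain this from positivity: the operator $\1-\Omega\ge0$ has top-left block $\Pi-\Pi\Omega\Pi=0$, and a positive semidefinite block matrix whose diagonal block vanishes must have the corresponding off-diagonal blocks vanish as well (by the elementary $2\times2$-minor / Cauchy--Schwarz estimate $|\bra{v}B\ket{w}|^2\le\bra{v}0\ket{v}\,\bra{w}(\cdots)\ket{w}=0$). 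This decoupling is the crux of the argument, since it removes the cross terms that would otherwise make the optimization intractable, and it is the step I expect to require the most care.

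With the block-diagonal form $\Omega=\Pi+\wh{\Omega}$ in hand, the optimization becomes routine. Setting $t:=\tr[\Pi\sigma]$, I would write $\tr[\Omega\sigma]=t+\tr[\wh{\Omega}\sigma]$; because $\wh{\Omega}$ lives on $\cV^\perp$ with $0\le\wh{\Omega}\le\1-\Pi$, I bound $\tr[\wh{\Omega}\sigma]=\tr[\wh{\Omega}(\1-\Pi)\sigma(\1-\Pi)]\le\lambda_{\max}(\wh{\Omega})\,\tr[(\1-\Pi)\sigma]=\lambda_{\max}(\wh{\Omega})(1-t)$. The resulting bound $t+\lambda_{\max}(\wh{\Omega})(1-t)$ is affine and non-decreasing in $t$ (as $\lambda_{\max}(\wh{\Omega})\le1$), so over the feasible range $t\le 1-\epsilon$ it is maximized at $t=1-\epsilon$, giving exactly $1-(1-\lambda_{\max}(\wh{\Omega}))\epsilon$.

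Finally I would confirm achievability so that the inequality is in fact an equality. Choosing any $\ket{v}\in\cV$ and the top eigenvector $\ket{u}\in\cV^\perp$ of $\wh{\Omega}$, the state $\sigma^\star=(1-\epsilon)\proj{v}+\epsilon\proj{u}$ obeys $\tr[\Pi\sigma^\star]=1-\epsilon$ and saturates every estimate above, yielding $\tr[\Omega\sigma^\star]=1-(1-\lambda_{\max}(\wh{\Omega}))\epsilon$. Since this matches the upper bound, the worst-case passing probability $p(\Omega)$ equals the claimed value, completing the proof.
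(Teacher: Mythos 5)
Your proof is correct and follows essentially the same route as the paper's: both reduce $\Omega$ to the block form $\Pi+\wh{\Omega}$ with respect to the splitting induced by $\Pi$, turn the constrained maximization into a one-parameter problem in $t=\tr[\Pi\sigma]$ that is affine and non-decreasing, and exhibit an achieving state of the form $(1-\epsilon)\proj{v}+\epsilon\proj{u}$. The only substantive difference is that you rigorously justify the decoupling step --- deriving $\Pi\Omega\Pi=\Pi$ from the trace condition and the vanishing of the off-diagonal block from positivity of $\1-\Omega$ via the PSD block-matrix argument --- where the paper simply asserts this form of $\Omega$ as a consequence of perfect completeness.
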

The proof can be found in Appendix~\ref{app:proof of subspace verification}. 
Therefore, the probability of accepting the \textbf{Bad} case is bounded as follows,   
\begin{align}
    \Pr\left\{\text{``accept''}\vert \sigma_1, \cdots, \sigma_N\right\} 
    \leq (1 - \nu(\Omega)\epsilon)^N, 
\end{align}
where $\nu(\Omega) := 1 - \lambda_{\max}(\wh{\Omega})$ is the \emph{spectral gap}. 
To achieve a confidence level $1-\delta$, it suffices to take 
\begin{align}\label{eq:sample-complexity}
    N(\Omega) \geq \frac{1}{\nu(\Omega)}\times\frac{1}{\epsilon}\ln\frac{1}{\delta} 
\end{align}
state copies. 
This inequality provides a guideline for constructing efficient verification strategies by maximizing $\nu(\Omega)$. 

As evident from Eqs.~\eqref{eq:verification-operator} and~\eqref{eq:sample-complexity}, 
the sample complexity of a verification strategy $\Omega$ depends heavily on the set of available measurements, $\cM$. 
Consider the extreme case where the strategies are constructed from measurements without any restrictions, referred to as \emph{global} or \emph{entangled measurements}, in contrast to the \emph{local measurements} that will be investigated.
The complexity of this globally optimal strategy serves as a reasonable benchmark for 
verification strategies with restricted measurements. 
Define the test POVM $\{\Omega_{g}, \1 - \Omega_{g}\}$ with $\Omega_{g} = \Pi$,
which satisfies $\nu(\Omega_{g}) = 1$. 
Therefore, it suffices to take $N \geq 1/\epsilon\ln 1/\delta$ state copies to achieve a confidence level $1-\delta$. 
However, it is important to note that the globally optimal verification strategy necessitates the use of entangled measurements if the target subspace is entangled (in which case there is at least one entangled basis state). 
Implementing entangled measurements is experimentally challenging. 
In the following, we discuss subspace verification under local and non-adaptive measurement restrictions, where each POVM element $M$ is a local projector 
and fixed \emph{a prior}, rather than being chosen based on prior measurement settings and/or their outcomes. 
Strategies under these restrictions are much more experimentally feasible. 
Notably, we show that these severely restrictions incur only a \emph{constant-factor} penalty compared to the globally optimal strategy. 

\section{Stabilizer subspace verification}

Let $\{I, X, Y, Z\}$ be the single-qubit Pauli operators. 
Let $\cP_n$ denotes the Pauli group on $n$ qubits, 
consisting of $n$-fold tensor products of $I, X, Y, Z$ with the overall factors $\pm 1$ or $\pm i$. 
An $[\![n, k, d]\!]$ stabilizer code is defined by a commutative group $\cS_{k}\subseteq \cP_n$ acting on the state space of $n$ physical qubits~\cite{gottesman1997stabilizer,breuckmann2021quantum}. 
The group $\cS_{k}$ has $n-k$ independent generators, labeled as $\cG_{k} := \{S_1, S_2, \cdots, S_{n-k}\}$. 
A stabilizer code $\cS_{k}$ induces a stabilizer subspace $\cV$, defined to be the $+1$ eigenspace of $\cS_{k}$ and can be interpreted
as the state space of $k$ logical qubits~\cite{gottesman1997stabilizer,shor1995scheme,1996multipleparticle,calderbank1996good,kitaev2003faulttoleranta}. 
Thus, verifying the stabilizer subspace is equivalent to verifying logical qubits~\cite{gottesman1997stabilizer}, 
a critical task as we are marching on the era of fault-tolerant quantum computing.
In the following, we provide two efficient verification strategies for 
verifying a general stabilizer subspace $\cV$ induced by $\cS_k$. 
These strategies utilize only Pauli measurements, which have two outcomes, $+1$ or $-1$, indicating whether the state lies in the positive or negative eigenspace of the corresponding Pauli operator.

\textbf{Strategy I:} This strategy involves choosing a stabilizer operator $P$ from $\cS_k\setminus\{\1\}$ 
uniformly at random and measure the state with $P$. 
If the measurement outcome is $+1$, the state passes this test; otherwise, it fails.
Mathematically, the verification operator of this strategy is given by 
\begin{align}
    \Omega_{\rm I} := \frac{1}{2^{n-k}-1}\sum_{P\in\cS_k\setminus\{\1\}} P^+, 
    \label{eq:stabilizer protocol 1}
\end{align}
where $P^+:=(P+\1)/2$ is the projector onto the positive eigenspace of the stabilizer operator $P$. 
We prove in Appendix~\ref{app:proof of stabilizer subspace verification} that 
$\Omega_{\rm I}$ is a valid verification strategy for $\cV$, 
the spectral gap of $\Omega_{\rm I}$ is $\nu(\Omega_{\rm I}) = 2^{n-k-1}/(2^{n-k} - 1)$,
and uniform random sampling is optimal when only the set of measurements $\cS_k$ is available.
Therefore, to achieve a confidence level $1-\delta$, it suffices to take 
\begin{align}\label{eq:Omega-1-sample-complexity}
    N(\Omega_{\rm I}) = \frac{2^{n-k}-1}{2^{n-k-1}}\frac{1}{\epsilon}\ln\frac{1}{\delta} \approx 2\frac{1}{\epsilon}\ln\frac{1}{\delta}
\end{align}
state copies, which is independent of the number of logical qubits $k$ and physical qubits $n$. 
Moreover, this strategy necessitates at most twice as many copies as the globally optimal verification. 
However, \textbf{Strategy I} requires implementing $2^{n-k} - 1$ different Pauli measurement settings, 
which increases exponentially with the number of generators $n - k$. 
This drawback motivates the second strategy, which involves exponentially fewer measurement settings. 

\textbf{Strategy II:} 
This strategy chooses a stabilizer generator $S$ from $\cG_k$ uniformly at random 
for each state and performs the corresponding measurement. 
The state passes only if the measurement outcome is $+1$.  
Mathematically, the verification operator of this strategy is given by
\begin{align}
    \Omega_{\rm II} := \frac{1}{n-k}\sum_{S\in\cG_k} S^+, 
    \label{eq:stabilizer protocol 2}
\end{align}
where $S^+ := (S+\1)/2$ is the projector onto the positive eigenspace of the stabilizer generator $S$. 
In Appendix~\ref{app:proof of stabilizer subspace verification}, we prove that 
$\Omega_{\rm II}$ is a valid verification strategy for $\cV$, 
the spectral gap of $\Omega_{\rm II}$ is $\nu(\Omega_{\rm II}) = 1/(n-k)$,
and uniform random sampling is optimal when only the set of measurements $\cG_k$ is available.
Therefore, to achieve a confidence level of $1 - \delta$, it suffices to choose
\begin{align}
    N(\Omega_{\rm II}) = (n-k)\frac{1}{\epsilon}\ln\frac{1}{\delta}.
\end{align}
\textbf{Strategy II} requires exponentially fewer measurement settings compared to \textbf{Strategy I} 
while consuming only $(n-k)/2$ times more state copies than \textbf{Strategy I}. 
The results of these two strategies are consistent with those in~\cite{pallister2018optimala}, 
which investigated the special case $k = 0$, corresponding solely to an entangled state. 
Although these two strategies work for all stabilizer subspaces, more efficient strategies warrant further investigation. 
In the following sections, we explore two large classes of stabilizer codes that are of practical interests 
and propose subspace verification strategies requiring fewer measurement settings and consuming fewer state samples.

\section{Graph subspace verification}

A stabilizer code can be represented as a graph code 
using local unitary transformations~\cite{schlingemann2001stabilizer}, 
and graph codes are of primary importance in photonic quantum technologies~\cite{bell2023optimizing}. 
An $[\![n,k,d]\!]$ graph code $\mathscr{G} = (G, W)$ is determined by an undirected graph $G=(V,E)$ 
and an Abelian group $W := \langle\bm{w}_1, \cdots, \bm{w}_k\rangle\subseteq \bZ_2^n$, where $V\equiv [n]:=\{1,\cdots,n\}$. 
The graph $G$ induces a \emph{graph state} $\ket{G}$, 
which is an $n$-qubit stabilizer state determined by the stabilizer generators $\cG_G=\{S_i\}_{i=1}^n$, 
where $S_i = X_i \prod_{(i, j)\in E} Z_j$, and $X_i$ means $X$ in the $i$-th qubit.
The corresponding \emph{graph subspace} $\cV_{\mathscr{G}}$ is spanned by an orthogonal basis $\{G_{\bm{w}}: \bm{w} \in W\}$, where 
\begin{align}\label{eq:graph basis}
    \ket{G_{\bm{w}}} := Z^{\bm{w}} \ket{G} = \prod_{i=1}^n Z_i^{w_i} \ket{G}. 
\end{align}
A graph code can also be represented in graphical form, with each $\bm{w}$ visualized as a logical qubit, as shown in Fig.~\ref{fig:graph subspace verification}.
In the following, we propose an efficient verification strategy for $\cV_{\mathscr{G}}$ in two steps. 
First, we construct a new graph structure of the graph code $\mathscr{G}$ consisting of $n-k$ vertices. 
Then, we introduce a coloring strategy for the graph code based on the new graph structure, 
inspired by hypergraph state verification strategies~\cite{zhu2019efficientb}. 
For the sake of easy understanding, we focus on the single logical qubit case, i.e., $k=1$, 
with the analysis for multi-logical qubits case detailed in Appendix~\ref{app:proof of the graph subspace}. 

\subsection{Step 1: Design a new graph} 

Since $k=1$, there is only one element $\bm{w}\neq\bm{0}\in W$. 
Let ${\rm supp}(\bm{x}) := \{l:\bm{x}_l\neq 0\}\subseteq [n]$ for $\bm{x}\in \bZ_2^n$.
We design a new graph $G' = (V', E')$ that induces $n-1$ stabilizer generators of the graph code $\mathscr{G}$ via $\bm{w}$.
The vertices in $V'$ and their associated stabilizer generators are constructed as follows:
\begin{itemize}
    \item If $\vert{\rm supp}(\bm{w})\vert=1$, 
            then $V' = V \setminus {\rm supp}(\bm{w})$ and $\forall a\in V'$, $S_a' = S_a$.
    \item Else, we first set $V' = V \setminus {\rm supp}(\bm{w})$ and $\forall a\in V'$, $S_a' = S_a$. 
          Then, we sort the vertices in ${\rm supp}(\bm{w})$ in the ascending order and group them pair-wisely. 
          For example, ${\rm supp}(\bm{w})=\{1,2,5\}$ yields $\{\{1,2\}, \{2,5\}\}$.
          Finally, we treat each pair $\{a,\bar{a}\}$ as a new vertex and add it to $V'$.
          The associated stabilizer $S'_{\{a,\bar{a}\}}:=S_aS_{\bar{a}}$.
\end{itemize}
By construction, $\vert V'\vert = n-1$ and vertices in $V'$ are directly related to vertices in $V$.
We show in Appendix~\ref{app:proof of the graph subspace} that such 
constructed stabilizers are indeed stabilizer generators of $\mathscr{G}$.
Note that the construction of stabilizers here is different from the 
standard stabilizers construction of graph states.

Define $\delta_{a,b} = 1$ if $(a,b)\in E$ else $\delta_{a,b}=0$. 
The set of edges $E'$ is constructed as follows: For arbitrary $v_1,v_2\in V'$, 
\begin{enumerate}
    \item If $v_1 = a$ and $v_2 = b$, $(v_1,v_2)\in E'$ if $\delta_{a,b}=1$; 
    \item If $v_1=a$ and $v_2=(b,\bar{b})$, $(v_1, v_2)\in E'$ if $\delta_{a,b} = 1$ or $\delta_{a,\bar{b}}=1$; 
    \item If $v_1=\{a,\bar{a}\}$ and $v_2=\{b,\bar{b}\}$, $(v_1,v_2)\not\in E'$ if at least one of the following two conditions holds: 
     \begin{enumerate}
        \item $v_1\cap v_2=\emptyset$ and $\delta_{a,b} = \delta_{a,\bar{b}} = \delta_{\bar{a},b} = \delta_{\bar{a},\bar{b}}$; 
        \item $v_1\cup v_2=\{a,b,c\}$ and $\delta_{a,b} = \delta_{a,c} = \delta_{b,c}$. 
    \end{enumerate}
\end{enumerate}
These rules are visualized in Appendix~\ref{app:proof of the graph subspace}. 
Note that the edges in $E'$ do not contribute to the construction of stabilizers but 
determine the construction of the verification strategy.

To deepen understanding of the construction, which is a bit complicated,
we present two representative examples of graph codes~\cite{cafaro2014schemea} in Fig.~\ref{fig:graph subspace verification},
explicitly detailing the original graphs, the new graphs, and the relationships between these graphs.

\begin{figure}[!htbp]
    \centering
    \includegraphics[width=\linewidth]{./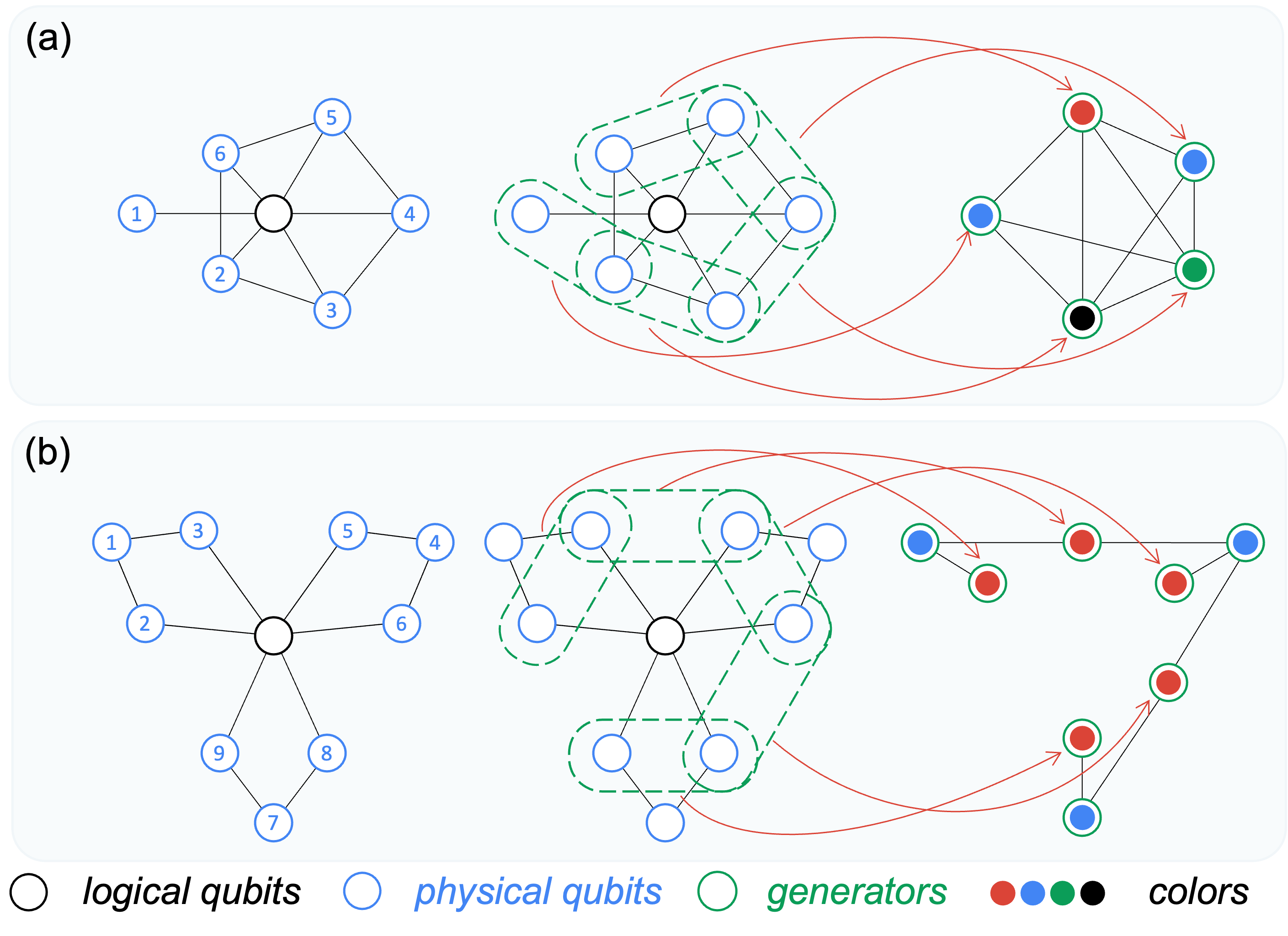}
    \caption{\raggedright 
    Examples of graph subspace verification: 
    (a) a $[\![6,1,3]\!]$ graph subspace and 
    (b) a $[\![9,1,3]\!]$ graph subspace~\cite{cafaro2014schemea}. 
    Figures in left show the original graph structure of these two codes, where (a) $\bm{w}=111111$ and (b) $\bm{w}=011011011$. 
    Figures in middle show the stabilizer generators: (a) $\{S_1 S_2, S_2 S_3, S_3 S_4, S_4 S_5, S_5 S_6\}$; 
    (b) $\{S_1, S_4, S_7, S_2 S_3, S_3 S_5, S_5 S_6, S_6 S_8, S_8 S_9\}$. 
    Figures in right show the new graph structures derived from these graph codes.
    Using the coloring strategy, these new graphs can be colored with $4$ and $2$ colors respectively. 
    This indicates that the corresponding verification strategies require $4$ and $2$ measurement settings: 
    (a) \red{\texttt{ZZZZYY}}, \blue{\texttt{XXZYYZ}}, \green{\texttt{ZZYYZZ}}, and \texttt{ZYYZZZ}; 
    (b) \red{\texttt{ZXXZXXZXX}} and \blue{\texttt{XZZXZZXZZ}}. 
    }
    \label{fig:graph subspace verification}
\end{figure}

\subsection{Step 2: Construct a coloring strategy}

Based on the new graph $G'$, 
we propose the following coloring verification strategy for the graph subspace $\cV_{\mathscr{G}}$. 
Let $\mathscr{A} := \{A_1, A_2, \cdots, A_m\}$ be an independence cover of $G'$ 
composed of $m$ nonempty independent sets~\cite{scheinerman2013fractional,zhu2019efficientb}. 
For each $A_\ell\in\mathscr{A}$, we construct a test operator $P_\ell$ in the following way: 
For each $v\in A_\ell$,
\begin{enumerate}
    \item If $v=a$, perform $X$ measurement on the $a$-th qubit; 
    \item If $v=(a,\bar{a})$ and $\delta_{a,\bar{a}}=1$, perform $Y$ measurements on $a$-th 
            and $\bar{a}$-th qubits; If $v=(a,\bar{a})$ and $\delta_{a,\bar{a}}=0$, 
            perform $X$ measurements on these two qubits. 
    \item Perform $Z$ measurement on other not involved qubits.
\end{enumerate}
The state passes the test only if the measurement result of $S'_a$ is $+1$ for all $a\in A_\ell$.
As shown in Appendix~\ref{app:proof of the graph subspace}, the test operator takes the form $P_\ell = \prod_{a \in A_\ell} (\1 + S'_a)/2$. 
The verification strategy is constructed by sampling from the set $\{P_\ell\}_{\ell=1}^m$ uniformly at random and 
the verification operator is given by
\begin{align}\label{eq:graph-subspace-protocol}
    \Omega_{\mathscr{G}} = \frac{1}{m}\sum_{\ell=1}^m P_\ell. 
\end{align}
In Appendix~\ref{app:proof of the graph subspace}, we prove that 
$\Omega_{\mathscr{G}}$ is a valid verification strategy for the graph code $\mathscr{G}$, 
the spectral gap satisfies $\nu(\Omega_{\mathscr{G}}) = 1/m$, and
uniform random sampling is optimal when only the set of measurements $\{P_\ell\}_{\ell=1}^m$ is available. 
Consequently, to achieve a confidence level of $1-\delta$, it suffices to take
\begin{align}
    N(\Omega_{\mathscr{G}}) = m\frac{1}{\epsilon}\ln\frac{1}{\delta}.
\end{align}
This result is consistent with~\cite{zhu2019efficientb}, which investigated hypergraph state verification. 
Recall that \textbf{Strategy II} requires $n-1$ measurement settings to verify this graph subspace. 
Since $m \leq n-1$ by the definition of the independence cover, the new strategy $\Omega_{\mathscr{G}}$ 
can verify the graph subspace using even fewer measurement settings and state copies, 
underscoring the superiority of the coloring strategy. 
Surprisingly, as demonstrated in Appendix~\ref{app:proof of the graph subspace}, 
this advantage persists in multi-logical qubit cases, 
which can be efficiently verified using a coloring strategy with $m \leq n-k$ measurement settings. 
This superiority is rooted in the fact that many test operators in \textbf{Strategy I} and \textbf{Strategy II} can be conducted in parallel,
thanks to the graph structure. This observation motivates an efficient protocol for the CSS subspace considered below. 

\begin{table*}[!htbp]
\centering
\renewcommand{\arraystretch}{1.5} 
\setlength{\tabcolsep}{12pt} 
\setlength\heavyrulewidth{0.3ex}  
\begin{tabular}{@{}cccc@{}}
\toprule
\textbf{Subspace} & \textbf{Strategy} & \textbf{\# Measurement Settings} & \textbf{Sample Complexity} \\ \midrule
\multirow{2}{*}{Stabilizer subspace} 
    & $\Omega_{\rm I}$ in Eq.~\eqref{eq:stabilizer protocol 1}  
    & $2^{n-k}-1$ & $2/\epsilon\ln(1/\delta)$ \\
    & $\Omega_{\rm II}$ in Eq.~\eqref{eq:stabilizer protocol 2} 
    & $n-k$ & $(n-k)/\epsilon\ln(1/\delta)$ \\
Graph subspace & $\Omega_{\mathscr{G}}$ in Eq.~\eqref{eq:graph-subspace-protocol} & $m\leq n-k$ & $m/\epsilon\ln(1/\delta)$ \\
CSS subspace & $\Omega_{\rm CSS}^{\rm XZ}$ in Eq.~\eqref{eq:CSS-subspace-protocol} & $2$ & $2/\epsilon\ln(1/\delta)$ \\
Dual-containing subspace & $\Omega_{\rm DC}^{\rm XYZ}$ in Eq.~\eqref{eq:DC-subspace-protocol} & $3$ & $1.5/\epsilon\ln(1/\delta)$ \\
Arbitrary & $\Omega_g$ (globally optimal) & $1$ & $1/\epsilon\ln(1/\delta)$ \\
\bottomrule
\end{tabular}
\caption{\raggedright
    Verification strategies for stabilizer subspace induced by an $[\![n,k,d]\!]$ quantum stabilizer code, 
    where $n$ is the number of physical qubits, $k$ is the number of logical qubits, and $d$ is the code distance. 
    The number of measurement settings is the number of required test operators in the strategy.
    The sample complexity is the minimal number of state copies consumed by the strategy
    to verify the target subspace with fidelity parameter $1-\epsilon$ and confidence level $1-\delta$.}
\label{tab:summarize strategies}
\end{table*}

\section{CSS subspace verification}

A large class of stabilizer codes is the CSS code~\cite{Nielsen_Chuang_2010,breuckmann2021quantum}. 
Notably, any $[\![n, k, d]\!]$ stabilizer code can be mapped onto a $[\![4n,2k,2d]\!]$ CSS code~\cite{haah2011local,breuckmann2021quantum}. 
An $[\![n, k, d]\!]$ CSS code can be described by two matrices 
$H_X\in\bZ_2^{k_X\times n}$ and $H_Z\in\bZ_2^{k_Z\times n}$, 
with the condition that $H_X H_Z^T = \bm{0}$~\cite{mackay2004sparse}. 
It is has $k_X+k_Z$ stabilizer generators of the form $X^{\bm{c}_X}$ and $Z^{\bm{c}_Z}$, 
where $P^{\bm{c}} := \bigotimes_{i \in [n]} P^{c_i}$ and $\bm{c}_X$ and $\bm{c}_Z$ are the rows of $H_X$ and $H_Z$, respectively. 
It determines a CSS subspace $\cV_{\rm CSS}$ with dimension $2^k$, where $k=n-(k_X+k_Z)$.
\textbf{Strategy II} requires $n-k = k_X+k_Z$ measurement settings to verify $\cV_{\rm CSS}$. 
However, by utilizing the structure information of the CSS code generators, 
we propose a more efficient verification strategy for $\cV_{\rm CSS}$, termed the XZ strategy, which requires only $2$ measurement settings. 
The strategy works as follows. 
We choose a Pauli operator $P\in\{X, Z\}$ uniformly at random and perform the measurement $P^{\otimes n}$.
The state passes the test if the measurement result of $P^{\bm{c}_P}$ is $+1$ for all $\bm{c}_P$. 
Mathematically, the verification operator is given by
\begin{align}\label{eq:CSS-subspace-protocol}
    \Omega_{\rm CSS}^{\rm XZ} = \frac{1}{2} 
    \sum_{P\in\{X,Z\}} M_P, 
\end{align}
where $M_P:=\prod_{\bm{c}_P} (\1 + P^{\bm{c}_P})/2$. 
We prove in Appendix~\ref{app:proof of the CSS subspace} that
$\Omega_{\rm CCS}^{\rm XZ}$ is a valid verification strategy for the CSS subspace, 
with the spectral gap satisfying $\nu(\Omega_{\rm CSS}^{\rm XZ}) = 1/2$. 
Moreover, uniform random sampling is optimal 
when the set of available measurements is $\{X^{\otimes n}, Z^{\otimes n}\}$. 
Interestingly, the spectral gap is independent of the number of logical qubits $k$ and physical qubits $n$.
Therefore, to achieve a confidence level $1-\delta$, it suffices to take 
\begin{align}
    N(\Omega_{\rm CSS}^{\rm XZ}) = 2\frac{1}{\epsilon}\ln\frac{1}{\delta}, 
\end{align}
which is independent of the subspace size and approximately equals to the required state copies of 
\textbf{Strategy I} in Eq.~\eqref{eq:Omega-1-sample-complexity}. 
In Appendix~\ref{app:proof of the CSS subspace}, we propose a concrete verification strategy 
for the subspace induced by the toric code, which is a special type of CSS code. 
Below, we present an even more efficient strategy for the subspace induced by the dual-containing code.

\subsection{Dual-containing subspace verification}

A well-known special case of CSS code is the dual-containing code with $H_X = H_Z$, e.g., the $[\![7,1,3]\!]$ Steane code~\cite{Nielsen_Chuang_2010,mackay2004sparse}. 
The corresponding subspace is termed the \emph{dual-containing subspace} and denoted by $\cV_{\rm DC}$. 
To verify $\cV_{\rm DC}$, we can directly apply $\Omega_{\rm CSS}^{\rm XZ}$,
which requires $2$ measurement settings and $2/\epsilon \ln(1/\delta)$ state copies. 
However, we propose an even more efficient strategy using $3$ measurement settings to further reduce the number of consumed state copies. 
The strategy works as follows. 
We choose a Pauli operator $P\in\{X, Z, Y\}$ uniformly at random and perform measurement $P^{\otimes n}$ on all physical qubits.
The state passes if the measurement result of $P^{\bm{c}}$ is $+1$ for all $\bm{c}$, where $\bm{c}$ represents the rows of $H_X$. 
The verification operator for this strategy is given by 
\begin{align}\label{eq:DC-subspace-protocol}
    \Omega_{\rm DC}^{\rm XYZ} = \frac{1}{3} 
    \sum_{P'\in\{X, Z, iY\}} M_{P'}, 
\end{align}
where $M_{P'} := \prod_{\bm{c}} (\1 + P'^{\bm{c}}) / 2$. 
We prove in Appendix~\ref{app:proof of the CSS subspace} that
$\Omega_{\rm DC}^{\rm XYZ}$ is a valid verification strategy for the dual-containing subspace, 
with the spectral gap satisfying $\nu(\Omega_{\rm DC}^{\rm XYZ}) = 2/3$. 
Furthermore, uniform random sampling is optimal when the set of available measurements is 
$\{X^{\otimes n},Y^{\otimes n},Z^{\otimes n}\}$. Thus, it suffices to take 
\begin{align}
    N(\Omega_{\rm DC}^{\rm XYZ}) = \frac{3}{2}\frac{1}{\epsilon}\ln\frac{1}{\delta}
\end{align}
state copies to achieve a confidence level of $1 - \delta$. 
Compared to $\Omega_{\rm CSS}^{\rm XZ}$ in Eq.~\eqref{eq:CSS-subspace-protocol}, 
this new strategy reduces the required number of state copies by one-quarter 
at the cost of adding an additional measurement setting, specifically performing $Y$ measurements on all qubits. 

\section{Conclusions}

We have established a general framework for verifying entangled subspaces, 
enabling efficient certification of whether an unknown state belongs to a specific subspace. 
Then, we focus on the entangled subspaces spanned by stabilizer codes, 
which constitute the fundamental cornerstone for the era of fault-tolerant quantum computing.  
The obtained results are summarized in Table~\ref{tab:summarize strategies}.
For a general $[\![n,k,d]\!]$ stabilizer code, we have proposed two efficient strategies to verify the corresponding stabilizer subspace. 
\textbf{Strategy I} is derived from the full stabilizer group, requires $2^{n-k}-1$ measurement settings, and consumes $2/\epsilon \ln(1/\delta)$ state copies. 
\textbf{Strategy II} is constructed solely from the stabilizer generators, requires $n-k$ measurement settings, and consumes $(n-k)/\epsilon \ln(1/\delta)$ states copies. 
To further improve the verification efficiency, we have investigated two special types of stabilizer codes, namely graph codes and CSS codes, which are of practical interests. 
For an $[\![n,k,d]\!]$ graph code, we have introduced a new graph structure of the graph code and propose a coloring strategy for verifying the corresponding graph subspace. 
This strategy uses $m\leq n-k$ measurement settings and requires $m/\epsilon \ln(1/\delta)$ state copies. 
For an $[\![n,k,d]\!]$ CSS code, we have proposed an XZ strategy that uses only $2$ measurement settings and consumes $2/\epsilon \ln(1/\delta)$ state copies, regardless of the code size. 
This strategy achieves \emph{nearly} the same performance as the globally optimal strategy in terms of both the required number of measurement settings and state copies.
For the special dual-containing codes, we have proposed an even more efficient strategy that uses 
just $3$ measurement settings and consumes $1.5/\epsilon\ln(1/\delta)$ state copies. 
This work contributes the first systematic study of efficient verification of stabilizer code subspaces with local measurements,
enabling experimentally efficient certification of both logical qubits and logical operations in noisy quantum computers.

Several key questions regarding quantum subspace verification remain unresolved, 
with one of the most prominent being the design and proof of \emph{optimal} verification strategies 
using local measurements. In existing approaches, the quantum states are typically destroyed during the measurement process, rendering them unusable for subsequent tasks. Thus, identifying verification strategies that employ quantum nondemolition measurements, 
which preserve the post-measurement states, is both intriguing and significant~\cite{liu2021universallya}.

\prlsection{Note.}After the completion of this work, 
we have become aware of a related work by Chen \textit{et al.}~\cite{chen2024quantuma}.

\section*{Acknowledgements}

This work was supported by 
the National Natural Science Foundation of China (Grant Nos. 62471126),  the Jiangsu Key R\&D Program Project (Grant No. BE2023011-2), 
the SEU Innovation Capability Enhancement Plan for Doctoral Students (Grant No. CXJH\_SEU 24083), 
the National Key Research and Development Program of China (Grant No. 2022YFF0712800), 
the Innovation Program for Quantum Science and Technology (Grant No. 2021ZD0301500), 
and the Fundamental Research Funds for the Central Universities (Grant No. 2242022k60001).


%

\makeatletter
\newcommand{\appendixtitle}[1]{\gdef\@title{#1}}
\newcommand{\appendixauthor}[1]{\gdef\@author{#1}}
\newcommand{\appendixaffiliation}[1]{\gdef\@affiliation{#1}}
\newcommand{\appendixdate}[1]{\gdef\@date{#1}}
\makeatother

\makeatletter%
\newcommand{\appendixmaketitle}{%
\begin{center}%
\vspace{0.4in}%
{\Large \@title \par}%
\end{center}%
\par%
}%
\makeatother%

\setcounter{secnumdepth}{2}
\appendix
\widetext
\newpage

\appendixtitle{\bf 
Supplemental Material for\\``\thetitle''}
\appendixmaketitle
\vspace{0.2in}


In this supplementary information,we provide details on 
\begin{itemize}
    \item In Appendix~\ref{app:proof of subspace verification}, 
    we provide details of \textbf{\emph{General framework of subspace verification}}. 
    \item In Appendix~\ref{app:proof of stabilizer subspace verification}, we prove that \textbf{Strategy I} and \textbf{Strategy II} are valid verification strategies for general stabilizer subspace. 
    \item In Appendix~\ref{app:proof of the graph subspace}, we provide the details of steps of graph subspace verification. 
    Especially, we consider the more general case where
    the subspace induced by $[\![n,k,d]\!]$ graph codes with $k\geq 2$. 
    \item In Appendix~\ref{app:proof of the CSS subspace}, we provide the details of CSS subspace verification,
            including toric codes and dual-containing codes as special cases.
\end{itemize}

\section{General framework of subspace verification}
\label{app:proof of subspace verification}

Consider a quantum computer $\cD$ that produces $N$ copies of $n$-qubit states, denoted 
$\sigma_1, \sigma_2, \cdots, \sigma_N$. 
The task of quantum state verification seeks to answer the question: 
\begin{center}
``Are the states $\sigma_i$ generated by $\cD$ equal to a fixed state $\ketbra{\psi}$?''
\end{center}
Similarly, the quantum subspace verification aims to answer the question: 
\begin{center}
``Are the states $\sigma_i$ generated by $\cD$ contained within a specific subspace $\cV$?''
\end{center}
In the following, we set up a formal framework for general subspace verification strategies. 

\subsection{Task of subspace verification}

To mathematically verify whether a state $\sigma\in\mathscr{D}(\cV)$, 
where $\cV := {\rm span}\{\ket{\psi_j}\}_j$, 
we define the projector $\Pi:=\sum\proj{\psi_j}$ and provide the following lemma. 

\begin{lemma}\label{lemma:condition for verifying subspace}
Let $\sigma\in\density{\cH}$ be a density operator. It holds that
$\tr[\Pi\sigma] = 1$ if and only if $\sigma\in\density{\cV}$.  
\end{lemma}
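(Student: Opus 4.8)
The plan is to prove the two implications of the biconditional separately, with the only nontrivial content resting on the positivity of $\sigma$ and of the complementary projector $\1 - \Pi$. Throughout I would use that $\sigma \in \density{\cV}$ means precisely that the support of $\sigma$ is contained in $\cV$, and that $\Pi$ acts as the identity on $\cV$.

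For the direction $\sigma \in \density{\cV} \Rightarrow \tr[\Pi\sigma] = 1$, I would diagonalize $\sigma = \sum_i p_i \proj{\phi_i}$ with $p_i \geq 0$, $\sum_i p_i = 1$, where each eigenvector with nonzero weight lies in the support of $\sigma$ and hence in $\cV$. Since $\Pi\ket{\phi_i} = \ket{\phi_i}$, one gets $\tr[\Pi\proj{\phi_i}] = \braket{\phi_i}{\phi_i} = 1$, so that $\tr[\Pi\sigma] = \sum_i p_i = 1$. This direction is pure bookkeeping with the spectral decomposition.

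For the converse $\tr[\Pi\sigma] = 1 \Rightarrow \sigma \in \density{\cV}$, I would pass to the orthogonal complement. Writing $\bar{\Pi} := \1 - \Pi$ for the projector onto $\cV^\perp$, the hypothesis becomes $\tr[\bar{\Pi}\sigma] = \tr[\sigma] - \tr[\Pi\sigma] = 0$. Here $\bar{\Pi}$ and $\sigma$ are both positive semidefinite, so the key step is the \emph{standard fact} that for positive semidefinite $A,B$ one has $\tr[AB] = 0 \iff AB = 0$. I would establish it via cyclicity, $\tr[\bar{\Pi}\sigma] = \tr[\sigma^{1/2}\bar{\Pi}\sigma^{1/2}] = \norm{\bar{\Pi}^{1/2}\sigma^{1/2}}{2}^2$, so that the vanishing trace forces $\bar{\Pi}^{1/2}\sigma^{1/2} = 0$ and hence $\bar{\Pi}\sigma = 0$. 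This yields $\Pi\sigma = \sigma$; taking the adjoint and using that $\sigma$ and $\Pi$ are Hermitian gives $\sigma\Pi = \sigma$ as well, so $\sigma = \Pi\sigma\Pi$ and the support of $\sigma$ lies in $\cV$, i.e.\ $\sigma \in \density{\cV}$.

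The only genuine obstacle is this positive-semidefinite trace lemma; the rest is complementarity $\Pi + \bar{\Pi} = \1$ together with the spectral decomposition. An equivalent and perhaps cleaner route for that lemma is to diagonalize $\bar{\Pi} = \sum_a \proj{a}$ over an orthonormal basis of $\cV^\perp$ and note that $\tr[\bar{\Pi}\sigma] = \sum_a \bra{a}\sigma\ket{a}$ is a sum of nonnegative terms, each of which must therefore vanish; since $\sigma \geq 0$, $\bra{a}\sigma\ket{a} = 0$ forces $\sigma\ket{a} = 0$, so $\cV^\perp \subseteq \ker\sigma$, which is again the desired support condition.
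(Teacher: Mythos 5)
Your proposal is correct, and the forward direction is essentially identical to the paper's (spectral decomposition plus $\Pi$ acting as the identity on $\cV$). The converse is where you genuinely diverge: the paper expands each eigenvector of $\sigma$ as $\ket{\phi_i}=\sin\theta_i\ket{\Psi_i}+\cos\theta_i\ket{\Psi_i^\bot}$ with $\ket{\Psi_i}\in\cV$, and uses the constraint $\sum_i\lambda_i\sin^2\theta_i=1$ to force $\sin\theta_i=1$, killing the orthogonal components term by term. You instead pass to $\bar{\Pi}=\1-\Pi$, reduce the hypothesis to $\tr[\bar{\Pi}\sigma]=0$, and invoke the positive-semidefinite trace lemma $\tr[AB]=0\Rightarrow AB=0$ via $\tr[\bar{\Pi}\sigma]=\norm{\bar{\Pi}^{1/2}\sigma^{1/2}}{2}^2$, concluding $\sigma=\Pi\sigma\Pi$. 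Both arguments are elementary and rest on the same underlying fact (a convex combination of quantities in $[0,1]$ equals $1$ only if each contributing term does), but your route is tighter: it sidesteps the paper's slightly loose step of concluding $\sin\theta_i=1$ ``for all $\theta_i$'' rather than only for indices with $\lambda_i>0$, and it packages the positivity argument into a reusable standard lemma. Your second variant, summing $\bra{a}\sigma\ket{a}$ over an orthonormal basis of $\cV^\perp$, is in fact the closest in spirit to what the paper does while remaining cleaner; either version is a fine substitute for the published proof.
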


\begin{proof}[Proof of Lemma~\ref{lemma:condition for verifying subspace}]
The \textbf{\emph{necessity}} is obvious.
If $\sigma\in\text{span}\{\ket{\psi_j}\}$, then we have 
\begin{align}
    \sigma = \sum_{jl} \sigma_{jl} \ket{\psi_j}\!\bra{\psi_l}
    \quad \Rightarrow \quad 
    \tr[\Pi \sigma] = \sum_j \bra{\psi_j}\sigma\ket{\psi_j} = 1. 
\end{align}
Now we turn to show the \textbf{\emph{sufficiency}}. 
For an arbitrary matrix $\sigma$, it can be written as 
\begin{align}
    \sigma = \sum_i \lambda_i \ketbra{\phi_i}, \quad 
    \sum_i \lambda_i = 1, \quad \lambda_i \geq 0. 
\end{align}
For each eigenstate $\ket{\phi_i}$, we have 
\begin{align}
    \ket{\phi_i} = \sin\theta_i\ket{\Psi_i} + \cos\theta_i\ket{\Psi_i^\bot},  
\end{align}
where $\sum_j |\braket{\psi_j}{\Psi_i}|^2 = 1$, and $\ket{\Psi_i^\bot}$ is orthogonal to $\ket{\Psi_i}$. 
Then, $\sigma$ can also be written as 
\begin{align}
    \sigma = \sum_i \lambda_i \left(\sin^2\theta_i\ketbra{\Psi_i} + \sin\theta_i\cos\theta_i\ket{\Psi_i}\!\bra{\Psi_i^\bot} + \sin\theta_i\cos\theta_i\ket{\Psi_i^\bot}\!\bra{\Psi_i} + 
    \cos^2\theta_i\ketbra{\Psi_i^\bot}\right). 
\end{align}
With the trace constraint, we have 
\begin{align}
    \sum_i \lambda_i \sin^2\theta_i = 1 
    \quad \Rightarrow \quad 
    \sin\theta_i = 1, \; \forall\; \theta_i, 
    \quad \Rightarrow \quad 
    \sigma = \sum_i \lambda_i \ketbra{\Psi_i}, 
\end{align}
which hints that $\sigma\in\spn{\ket{\psi_i}}$. 
\end{proof}
With the help of the above lemma, we can now formally define the quantum subspace verification 
task---Given a quantum device $\cD$ that is designed to produce states in $\cV$, distinguish between the following two cases:
\begin{enumerate}
    \item \textbf{Good}: for all $i\in[N]$, $\tr[\Pi\sigma_i] = 1$;
    \item \textbf{Bad}:  for all $i\in[N]$, $\tr[\Pi\sigma_i] \leq 1 - \epsilon$ for some fixed $\epsilon$. 
\end{enumerate}

\subsection{The conditions for verification operator}

To complete this distinguish task, we consider randomly pick a POVM element $M\in\cM$ with some probability and consider the corresponding two-outcomes POVMs $\{M, \1-M\}$, 
where $M$ has output ``pass'' and $\1-M$ has output ``fail''. 
Moreover, we define a probability mass $\mu: \cM \to [0, 1]$, $\sum \mu(M) = 1$. 
The probability of a generated quantum state $\sigma$ passing the test can be expressed as 
\begin{align}
    \Pr\left\{\text{``pass''}\vert \sigma\right\} 
    = \sum_{M\in\cM} \mu(M) \tr[M \sigma]
    = \tr[\Omega \sigma], 
\end{align}
where the \emph{verification operator} of this protocol is defined as 
\begin{align}
    \Omega := \sum_{M\in\cM} \mu(M) M. 
\end{align}
To satisfy the requirement of the verification task,
we impose two conditions on the verification operator $\Omega$: 
\begin{enumerate}
    \item \textbf{\emph{Perfect completeness}} condition: 
    states within the target subspace $\cV$ will always pass the test, that is, 
    \begin{align}
        \tr[\Omega \sigma] = 1,\quad\forall\;\sigma\in\mathscr{D}(\cV). 
    \end{align}
    \item \textbf{\emph{Soundness}} condition: states in the \textbf{Bad} case can be rejected with high probability. 
\end{enumerate}

\subsubsection{Perfect completeness condition}

The perfect completeness condition can be equivalently characterized using the projector $\Pi$ associated with $\cV$ as follows. 
\begin{lemma}
\label{lemma:perfect completeness condition}
The perfect completeness condition can be equivalently characterized as
\begin{align}
    \tr[\Omega\Pi]=\rank(\Pi),
\end{align}
where $\rank(\Pi)$ is the rank of the projector.
\end{lemma}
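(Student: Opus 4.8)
The plan is to establish the equivalence in both directions, relying throughout on the structural fact that $\Omega = \sum_{M\in\cM}\mu(M)M$ is a probabilistic mixture of two-outcome test operators $M$ with $0\leq M\leq\1$, so that $\Omega$ inherits $0\leq\Omega\leq\1$.

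For the forward direction (perfect completeness $\Rightarrow \tr[\Omega\Pi]=\rank(\Pi)$) I would test the completeness condition against the orthonormal basis states of $\cV$. Each $\proj{\psi_j}$ is a density operator in $\density{\cV}$, so perfect completeness forces $\tr[\Omega\proj{\psi_j}]=1$ for every $j$. Summing over $j$, using linearity of the trace and $\Pi=\sum_j\proj{\psi_j}$, gives $\tr[\Omega\Pi]=\sum_j 1$, which equals $\rank(\Pi)=\dim\cV$ since the number of basis states is exactly the rank of $\Pi$.

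For the reverse direction the key idea is an eigenvalue-saturation argument on the compression $\Pi\Omega\Pi$, regarded as a Hermitian operator supported on $\cV$. Because $\Omega\leq\1$, we have $\Pi\Omega\Pi\leq\Pi$, so each of the $r:=\rank(\Pi)$ eigenvalues of $\Pi\Omega\Pi$ restricted to $\cV$ is at most $1$. The hypothesis $\tr[\Omega\Pi]=\tr[\Pi\Omega\Pi]=r$ says these $r$ eigenvalues sum to exactly $r$; since each is bounded by $1$, they must all equal $1$, which forces $\Pi\Omega\Pi=\Pi$. Then for an arbitrary $\sigma\in\density{\cV}$, which satisfies $\sigma=\Pi\sigma\Pi$, cyclicity of the trace yields $\tr[\Omega\sigma]=\tr[\Pi\Omega\Pi\,\sigma]=\tr[\Pi\sigma]=1$, where the last step invokes Lemma~\ref{lemma:condition for verifying subspace}. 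This is precisely perfect completeness.

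The main obstacle is the eigenvalue-saturation step in the reverse direction: one must argue that the operator inequality $0\leq\Omega\leq\1$ together with the scalar trace equality pins down $\Pi\Omega\Pi=\Pi$ \emph{exactly} as an operator identity, not merely as an averaged statement. The clean way to do this is through the eigenvalue bound $\lambda_i\leq 1$ combined with the saturated sum $\sum_i\lambda_i=r$; everything else reduces to routine projector identities and the cyclic property of the trace.
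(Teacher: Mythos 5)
Your proposal is correct, and it is both more complete and somewhat different in route from the paper's own argument. The paper proves only the forward direction explicitly: it asserts that perfect completeness forces the block form $\Omega = \Pi + \sum_l \omega_l\proj{\psi_l^\bot}$ (with vanishing coherences between $\cV$ and $\cV^\bot$), and then computes $\tr[\Omega\Pi]=\tr[\Pi]=\rank(\Pi)$; the justification of that block form is left at the level of ``otherwise the condition does not hold.'' You instead get the forward direction by the elementary step of summing $\tr[\Omega\proj{\psi_j}]=1$ over the basis of $\cV$, and you supply the reverse direction—which the paper does not spell out—via the eigenvalue-saturation argument on the compression $\Pi\Omega\Pi$: since $0\leq\Omega\leq\1$ gives $\Pi\Omega\Pi\leq\Pi$, the trace equality forces all $\rank(\Pi)$ eigenvalues on $\cV$ to equal $1$, hence $\Pi\Omega\Pi=\Pi$ and $\tr[\Omega\sigma]=\tr[\Pi\sigma]=1$ for every $\sigma=\Pi\sigma\Pi\in\density{\cV}$. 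The same saturation argument, pushed one step further using positivity of $\1-\Omega$, is in fact what one needs to rigorously justify the paper's claimed decomposition (it kills the off-diagonal blocks $\Pi\Omega(\1-\Pi)$), so your write-up can be read as the missing justification for the paper's Eq.~\eqref{eq:perfect completeness condition of omega}. What the paper's route buys is that the structural form $\Omega=\Pi+\wh{\Omega}$ is reused repeatedly later (in Theorem~\ref{theorem:worst-case passing probability} and in the stabilizer-strategy proofs), whereas your leaner argument establishes exactly the equivalence stated in the lemma without that extra structure.
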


\begin{proof}[Proof of Lemma~\ref{lemma:perfect completeness condition}]
With perfect completeness condition, there exist a set of orthogonal bases $\{\ket{\psi_l^\bot}\}_l$ in the complementary subspace of the target subspace, 
such that $\Omega$ can be written as 
\begin{align}
    \Omega = \Pi + \sum \omega_l\ketbra{\psi^\bot_l},  
    \label{eq:perfect completeness condition of omega}
\end{align}
otherwise $\forall\;\sigma\in\spn{\ket{\psi_j}}, \tr[\Omega\sigma] = 1$ does not hold. 
We define the projected effective verification operator as 
\begin{align}
    \wh{\Omega} 
    := (\1-\Pi)\Omega(\1-\Pi) 
    = \sum \omega_l\ketbra{\psi^\bot_l}. 
\end{align}
Therefore, we have 
\begin{align}
    \tr[\Omega\Pi] 
    = \tr[\Pi^2] + \tr[\wh{\Omega}\Pi] 
    = \tr[\Pi] 
    = \rank(\Pi). 
\end{align}
\end{proof}

\subsubsection{Soundness condition}

Next, we consider the soundness condition. 
We demonstrate that the worst-case passing probability, $p(\Omega)$, defined as 
\begin{align}
    p(\Omega) := \max_{\sigma:\tr[\Pi\sigma]\leq 1-\epsilon}\Pr\{\text{``pass''}|\sigma\}, 
\end{align}
in the \textbf{Bad} case is uniquely determined by the largest eigenvalue of the projected effective verification operator. 
Specifically, 
\begin{align}
    p(\Omega) = \max_{\sigma:\tr[\Pi\sigma]\leq1-\epsilon}\tr[\Omega\sigma]
= 1 - (1-\lambda_{\max}(\wh{\Omega}))\epsilon. 
\end{align}
We provide the proof as follows. 
\begin{proof}[Proof of the Theorem~\ref{theorem:worst-case passing probability}]
For a fixed set $\{\ket{\psi_l^\bot}\}$, 
an arbitrary quantum state $\sigma$ with $\tr[\Pi\sigma] = r$ can always be written as 
\begin{align}
    \sigma = r \Psi + (1-r) \Psi^\bot + \sum_{jl} \left(c_{jl}\ket{\psi_j}\!\bra{\psi_l^\bot} + c_{jl}^* \ket{\psi_l^\bot}\!\bra{\psi_j}\right), 
\end{align}
where $\Psi$ and $\Psi^\bot$ are the states in the $\spn{\ket{\psi_j}}$ and $\spn{\ket{\psi_l^\bot}}$, respectively. 
Then, such a state will pass the test with probability 
\begin{align}
    \Pr\{\text{``pass''}\vert\sigma\} 
    &= \tr[\Omega\sigma] \\
    &= r\tr[\Omega\Psi] + (1 - r)\tr[\wh{\Omega}\Psi^\bot] \\
    &\leq r + (1 - r)\lambda_{\max}(\wh{\Omega}). 
\end{align}
The above inequality becomes an equality if 
\begin{align}
    \Psi^\bot = \ketbra{\psi^\bot_{\max}}, 
\end{align}
where $\ket{\psi_{\max}^\bot}$ is the eigenstate of $\wh{\Omega}$ corresponding to the largest eigenvalue $\lambda_{\max}(\wh{\Omega})$.
Thus, for a fixed $\Omega$, 
\begin{align}
    \max_{\sigma:\tr[\Pi\sigma] = r} \Pr\{\text{``pass''}\vert\sigma\} 
    = r + (1 - r)\lambda_{\max}(\wh{\Omega}), 
\end{align}
which is achieved by any density matrix of the form 
\begin{align}
    \sigma 
    = r\Psi + (1-r)\ketbra{\psi_{\max}^\bot} + \sum_{jl} \left(c_{jl}\ket{\psi_j}\!\bra{\psi_l^\bot} + c_{jl}^* \ket{\psi_l^\bot}\!\bra{\psi_j}\right). 
\end{align}
Note that the pure state $\sigma = \ketbra{\phi}$ for 
\begin{align}
    \ket{\phi} = \sqrt{r}\ket{\psi'} + \sqrt{1-r} \ket{\psi^\bot_{\max}}, 
\end{align}
where $\ket{\psi'}$ is the linear combination of vectors $\ket{\psi_j}$, is of this form. 
Therefore, we can only consider pure states in the following analysis. 

Now, for a fixed $\bar{\epsilon} \geq \epsilon> 0$, we define a state $\sigma = \ketbra{\phi_{\bar{\epsilon}}}$ with 
$\ket{\phi_{\bar{\epsilon}}} = \sqrt{1-\bar{\epsilon}}\ket{\phi} + \sqrt{\bar{\epsilon}}\ket{\phi^\bot}$, 
where $\ket{\phi}$ is the linear combination of vectors $\{\ket{\psi_j}\}_j$ 
and $\braket{\phi}{\phi^\bot} = 0$. 
Then, we define that the worst-case passing probability as 
\begin{align}
    p(\Omega) 
    &:= \max_{\sigma:\tr[\Omega\sigma]\leq 1-\epsilon} 
    \Pr\{\text{``pass''}\vert\sigma\} \\
    &= \max_{\sigma:\tr[\Omega\sigma]\leq 1-\epsilon} 
    \tr[\Omega\sigma] \\
    &= \max_{\bar{\epsilon} \geq \epsilon, \ket{\phi^\bot}} 1 - \bar{\epsilon} + \bar{\epsilon}\bra{\phi^\bot}\wh{\Omega}\ket{\phi^\bot} \\
    &= 1 - (1 - \lambda_{\max}(\wh{\Omega})) \epsilon. 
\end{align}
\end{proof}
Therefore, the probability of accepting the \textbf{Bad} case is bounded as follows,   
\begin{align}
    \Pr\left\{\text{``accept''}\vert \sigma_1, \cdots, \sigma_N\right\} 
    \leq (1 - \nu(\Omega)\epsilon)^N. 
\end{align}
We want this probability to be bounded from above by $\delta > 0$, that is 
\begin{align}
    (1 - \nu(\Omega)\epsilon)^N \leq \delta. 
\end{align} 
Thus, we have 
\begin{align}
    N \geq \frac{1}{\ln(1-\nu(\Omega)\epsilon)^{-1}} \ln\frac{1}{\delta} 
    \approx \frac{1}{\nu(\Omega)}\times\frac{1}{\epsilon}\ln\frac{1}{\delta}. 
\end{align}
This inequality provides a guideline for constructing efficient verification by maximizing $\nu(\Omega)$ so that we can use less state copies. 

\section{Proof of stabilizer subspace verification}
\label{app:proof of stabilizer subspace verification}

In this section, we prove the two verification strategies of stabilizer subspace. 
For the subspace $\cV$ determined by $\cG_k$, we define a set of orthogonal bases in $\cV$ as 
\begin{align}
    \{\ket{\psi_1}, \cdots, \ket{\psi_{2^{k}}}\}. 
\end{align}
The set of stabilizer operators is defined as $\cS_k$, and the projector onto $\cV$ can be defined in the following two ways, 
\begin{align}
    \Pi_{\cV} = \frac{1}{2^{n-k}} \sum_{P\in\cS_k}P = \sum_{j=1}^{2^{k}} \ketbra{\psi_j}. 
\end{align}
With Eq.~\eqref{eq:perfect completeness condition of omega}, we know that a feasible verification strategy $\Omega$ must be in the following form: 
\begin{align}
    \Omega = \sum_{j=1}^{2^{k}} \ketbra{\psi_j} + \sum_{l=1}^{2^n-2^{k}} \omega_l \ketbra{\psi^\bot_l} 
    \label{eq:form of the stabilizer verification strategy}
\end{align}
where $\{\ket{\psi_1^\bot}, \cdots, \ket{\psi_{2^n-2^{k}}^\bot}\}$ is a set of orthogonal bases in complementary subspace of $\cV$. 
Additionally, the spectral gap of $\Omega$ is 
\begin{align}
    \nu(\Omega) = 1 - \max_l\;\omega_l. 
\end{align}

\subsection{Proof of Strategy I} 

Now, we begin our proof. 
Firstly, we show that $\Omega_{\rm I}$ defined is a valid verification strategy and compute the corresponding spectral gap. 
For each $P\in\cS_k\setminus\{\1\}$, we have 
\begin{align}
    P = P^+ - P^-, \quad 
    P^+ + P^- = \1, 
\end{align}
where $P^+\;(P^-)$ is the projector onto the positive (negative) eigenspace of $P$. 
With the above decomposition, we have 
\begin{align}
    \sum_{j=1}^{2^{k}}\ketbra{\psi_j}
    &= \frac{1}{2^{n-k}} \sum_{P\in\cS_k\setminus\{\1\}} \left(P^+ - P^-\right) + \frac{1}{2^{n-k}} \1 \\
    &= \frac{1}{2^{n-k}} \sum_{P\in\cS_k\setminus\{\1\}} \left(2P^+ - \1\right) + \frac{1}{2^{n-k}} \1 \\
    &= \frac{1}{2^{n-k-1}} \sum_{P\in\cS_k\setminus\{\1\}} P^+ - \left(1 - \frac{1}{2^{n-k-1}}\right)\1. 
\end{align}
Then, we have 
\begin{align}
    \sum_{P\in\cS_k\setminus\{\1\}} P^+ 
    &= 2^{n-k-1}\sum_{j=1}^{2^{k}}\ketbra{\psi_j} + \left(2^{n-k-1} - 1\right)\1 \\
    &= 2^{n-k-1}\sum_{j=1}^{2^{k}}\ketbra{\psi_j} + \left(2^{n-k-1} - 1\right)\left(\sum_j \ketbra{\psi_j} + \sum_l \ketbra{\psi^\bot_l}\right) \\
    &= (2^{n-k} - 1) \sum_{j=1}^{2^{k}}\ketbra{\psi_j} + \left(2^{n-k-1} - 1\right)\sum_l \ketbra{\psi^\bot_l}. 
\end{align}
Finally, we derive the desired equation 
\begin{align}
    \frac{1}{2^{n-k}-1}\sum_{P\in\cS_k\setminus\{\1\}} P^+ 
    &= \sum_{j=1}^{2^{k}}\ketbra{\psi_j} + \frac{2^{n-k-1} - 1}{2^{n-k}-1}\sum_l \ketbra{\psi_l^\bot}, 
\end{align}
which hints the $\Omega_{\rm I}$ defined in Eq.~\eqref{eq:stabilizer protocol 1} is feasible and 
\begin{align}
    \nu(\Omega_{\rm I}) = 1 - \frac{2^{n-k-1}-1}{2^{n-k}-1} = \frac{2^{n-k-1}}{2^{n-k}-1}. 
\end{align}

Then, we prove that the uniform sample distribution is the optimal one. 
Define the probability of measurement $P$ to be sampled as $\mu(P)$, $P\in\cS_k\setminus\{\1\}$. 
The corresponding verification operator is 
\begin{align}
    \Omega_{{\rm I},\mu} := \sum_{P\in\cS_k\setminus\{\1\}} \mu(P) P^+. 
\end{align}
We have $\tr[\Omega_{{\rm I},\mu}] = 2^{n-1}$. 
With Eq.~\eqref{eq:form of the stabilizer verification strategy}, we know that 
\begin{align}
    \tr[\Omega_{{\rm I},\mu}] = 2^{k} + \sum_l \omega_l = 2^{n-1}. 
\end{align}
Therefore, we have 
\begin{align}
    \max_l \omega_l \geq \frac{2^{n-1} - 2^{k}}{2^n - 2^{k}} 
    = \frac{2^{n-k-1}-1}{2^{n-k} - 1}, 
\end{align}
the equality holds when $\mu(P) = \frac{1}{2^{n-k}-1}$ for all $P\in\cS_k\setminus\{\1\}$, i.e., $\Omega_{{\rm I},\mu} = \Omega_{\rm I}$.

\subsection{Proof of Strategy II}


Here, we define a verification strategy with probability distribution $\mu$ as 
\begin{align}
    \Omega_{{\rm II}, \mu} := \sum_{S\in\cG_k} \mu(S) S^+, 
\end{align}
i.e., each measurement $S\in\cG_k$ is sampled with probability $\mu(S)$. 
Then, we have 
\begin{align}
    \tr[\Omega_{{\rm II}, \mu} \Pi_V] 
    &= \frac{1}{2^{n-k}} \sum_{S\in\cG_k, P\in\cS_k} \mu(S) \tr[PS^+] \\
    &= \frac{1}{2^{n-k}} \sum_{S\in\cG_k, P\in\cS_k} \frac{1}{2}\mu(S)\left(\tr[P] + \tr[P S]\right) \\ 
    &= \frac{1}{2^{n-k+1}} \left[\sum_{S\in\cG_k, P\in\cS_k}\mu(S)\tr[P] + \sum_{S\in\cG_k, P\in\cS_k} \tr[PS]\right] \\
    &= \frac{1}{2^{n-k+1}} \left[2^n + \sum_{S\in\cG_k} \mu(S)\cdot 2^n\right] \\
    &= \frac{2^{n+1}}{2^{n-k+1}} = 2^{k} = \rank(\Pi_V). 
\end{align}
Thus, $\Omega_{{\rm II}, \mu}$ must satisfy perfect completeness condition defined in Lemma~\ref{lemma:perfect completeness condition}. 
Subsequently, we analyze the spectral gap of $\Omega_{{\rm II},\mu}$. 
We define a set of stabilizer generators with size $n$, 
\begin{align}
    \cG_n = \{\underbrace{S_1, \cdots, S_{n-k}}_{\cG_k}, S_{n-k+1}, \cdots, S_n\}. 
\end{align}
Then, we can construct a set of orthogonal bases $\ket{C_w}$ with $n$-bit strings $\{\bm{w}\}$, where 
\begin{align}
    \ketbra{C_{\bm{w}}} = \prod_{j=1}^n \frac{\1 + (-1)^{\bm{w}_j} S_j}{2}, \quad 
    \bm{w} \in \bZ_2^n. 
\end{align}
Obviously, $\ket{C_{\bm{w}}}$ is also a stabilizer state for all $\bm{w}$~\cite{dangniam2020optimal}. 
And there is a subset $W\subseteq \bZ_2^n$, for all $\bm{w}\in W$, $\ket{C_{\bm{w}}}\in \cV$. 
In other word, for a fixed $\bm{w}\in W$, the first $n-k$ bits of it are all zeros. 
Then, we can define arbitrary state in $\cV^\bot$ as 
\begin{align}
    \ket{\Psi^\bot} = \sum_{\bm{w}\in W^\bot} \alpha_w \ket{C_w  }, \quad 
    \sum_{\bm{w}} |\alpha_{\bm{w}}|^2 = 1, 
\end{align}
and we have 
\begin{align}
    \bra{\Psi^\bot}\Omega_{{\rm II}, \mu}\ket{\Psi^\bot} 
    &= \sum_{i=1}^{n-k} \mu(S_i) \bra{\Psi^\bot}S_i^+\ket{\Psi^\bot} \\
    &= \sum_{i=1}^{n-k} \mu(S_i) \sum_{\bm{w}, \bm{w}'\in W^\bot} \alpha_{\bm{w}}^* \alpha_{\bm{w}'} \bra{C_{\bm{w}}}S_i^+\ket{C_{\bm{w}'}} \\
    &= \sum_{i=1}^{n-k} \mu(S_i) \sum_{\bm{w}, \bm{w}'\in W^\bot} \alpha_{\bm{w}}^* \alpha_{\bm{w}'} \delta_{\bm{w} \bm{w}'} \epsilon_{i, \bm{w}'} \\
    &= \sum_{i=1}^{n-k} \mu(S_i) \sum_{\bm{w}\in W^\bot} |\alpha_{\bm{w}}|^2 \epsilon_{i, \bm{w}}, 
\end{align}
where $W^\bot = \bZ_2^n \setminus W$, $S_i^+\ket{C_{\bm{w}}} = \epsilon_{i,\bm{w}} \ket{C_{\bm{w}}}$, and 
\begin{align}
    \epsilon_{i, \bm{w}} = \begin{cases}
        1 \quad \text{if $i$-th bit of $\bm{w}$ is 0} \\
        0 \quad \text{else}
    \end{cases}. 
\end{align}
Therefore, we have 
\begin{align}
    \max_{\Psi^\bot} \bra{\Psi^\bot}\Omega_{{\rm II}, \mu}\ket{\Psi^\bot} 
    &= \max_{\Psi^\bot} \sum_{\bm{w}\in W^\bot} |\alpha_{\bm{w}}|^2 \left(\sum_{i=1}^{n-k} \mu(S_i) \epsilon_{i,\bm{w}}\right) \\
    &= \max_{\bm{w}\in W^\bot} \sum_{i=1}^{n-k} \mu(S_i) \epsilon_{i, w} \\
    &= 1 - \min_{S_i} \mu(S_i), 
\end{align}
with the fact that for the first $k$ bits of $\bm{w}\in W^\bot$, there are at most $n-k-1$ bits equal to $0$. 
Then, we have 
\begin{align}
    \max_{\Psi^\bot} \bra{\Psi^\bot}\Omega_{{\rm II}, \mu}\ket{\Psi^\bot} 
    \geq 1 - \frac{1}{n-k}, 
\end{align}
the equality holds when $\mu(S_i) = \frac{1}{n-k}$ for all $i\in[n-k]$. 
Therefore, $\Omega_{\rm II}$ defined in Eq.~\eqref{eq:stabilizer protocol 2} is the optimal one, and we have 
\begin{align}
    \nu(\Omega_{\rm II}) = \frac{1}{n-k}. 
\end{align}

\section{Proof of graph subspace verification}
\label{app:proof of the graph subspace}

A graph subspace $\cV_{\mathscr{G}}$ is defined by a graph $G = (V, E)$ and an Abelian group $W = \langle\bm{w}_1, \cdots, \bm{w}_k\rangle$, 
such that the dimension of $\cV_{\mathscr{G}}$ is $2^k$. 
The graph subspace $\cV_{\mathscr{G}}$ has a set of orthogonal bases $\{G_{\bm{w}}: \bm{w}\in W\}$, where 
\begin{align}
    \ket{G_{\bm{w}}} := Z^{\bm{w}} \ket{G} = \prod_{i} Z_i^{w_i} \ket{G}. 
\end{align}
Each $\bm{w}\in W$ represents a logical qubit and connected to the $a$-th physical qubit if $a\in {\rm supp}(\bm{w})$. 
In the following, we will demonstrate that the problem of graph subspace verification can be reduced to the problem of coloring a graph with $n-k$ vertices. 

\subsection{The evolution of generators}

Recall that a graph $G$ with $n$ vertices has $n$ stabilizer generators. 
However, with the introduction of $W$, the number of stabilizer generators is reduced to $n-k$. 
This raises the question of which generators will be deleted or modified. 

With the constraints imposed by $W$, 
the set of generators must take the following form: 
\begin{align}
    \cG_{\mathscr{G}} &:= \{S^{\bm{y}}:  \forall\;i\in[k],|{\rm supp}(\bm{y}) \cap {\rm supp}(\bm{w}_i)| \text{ is even }\}
    \label{eq:new stabilizer generators}
\end{align}
where 
\begin{align}
    S^{\bm{y}} = \prod_{a\in{\rm supp}(\bm{y})} S_a. 
\end{align}
Specially, let $\{S^{\bm{0}}\} = \emptyset$. 
A set of operator $\{S^{\bm{y}}\}$ is said to be independent if $\{\bm{y}\}$ is a set of linear independent bases. 
We can then divide the stabilizer generators of the graph code $\mathscr{G}$ into the following two parts: 
\begin{align}
    \cG_{\mathscr{G}} &= \cG_{\overline{W}} \cup \cG_{W}.  
\end{align}
The first part, $\cG_{\overline{W}}$, can be expressed in two equivalent forms: 
\begin{align}
    \cG_{\overline{W}} 
    :&= \{S^{\bm{y}}: \forall\;i\in[k], {\rm supp}(\bm{y})\cap{\rm supp}(\bm{w}_i) = \emptyset \text{ and } |{\rm supp}(\bm{y})|=1\} 
\end{align}
or 
\begin{align}
   \cG_{\overline{W}} = \{S_a:a\in \overline{W}\}, 
\end{align}
where $\overline{W} = \{a:a\not\in {\rm supp}(\bm{w}_i),\;\forall\;i\in[k]\}$,
representing the set of physical qubits that are not connected to any logical qubits. Therefore, the set $\cG_{\overline{W}}$ can be directly obtained from $G$ and $W$. 
Next, we focus on constructing $\cG_W$, which will allow us to fully determine the set of generators for the graph code $\mathscr{G}$. 

\begin{figure}[!htbp]
    \centering
    \includegraphics[width=0.7\linewidth]{./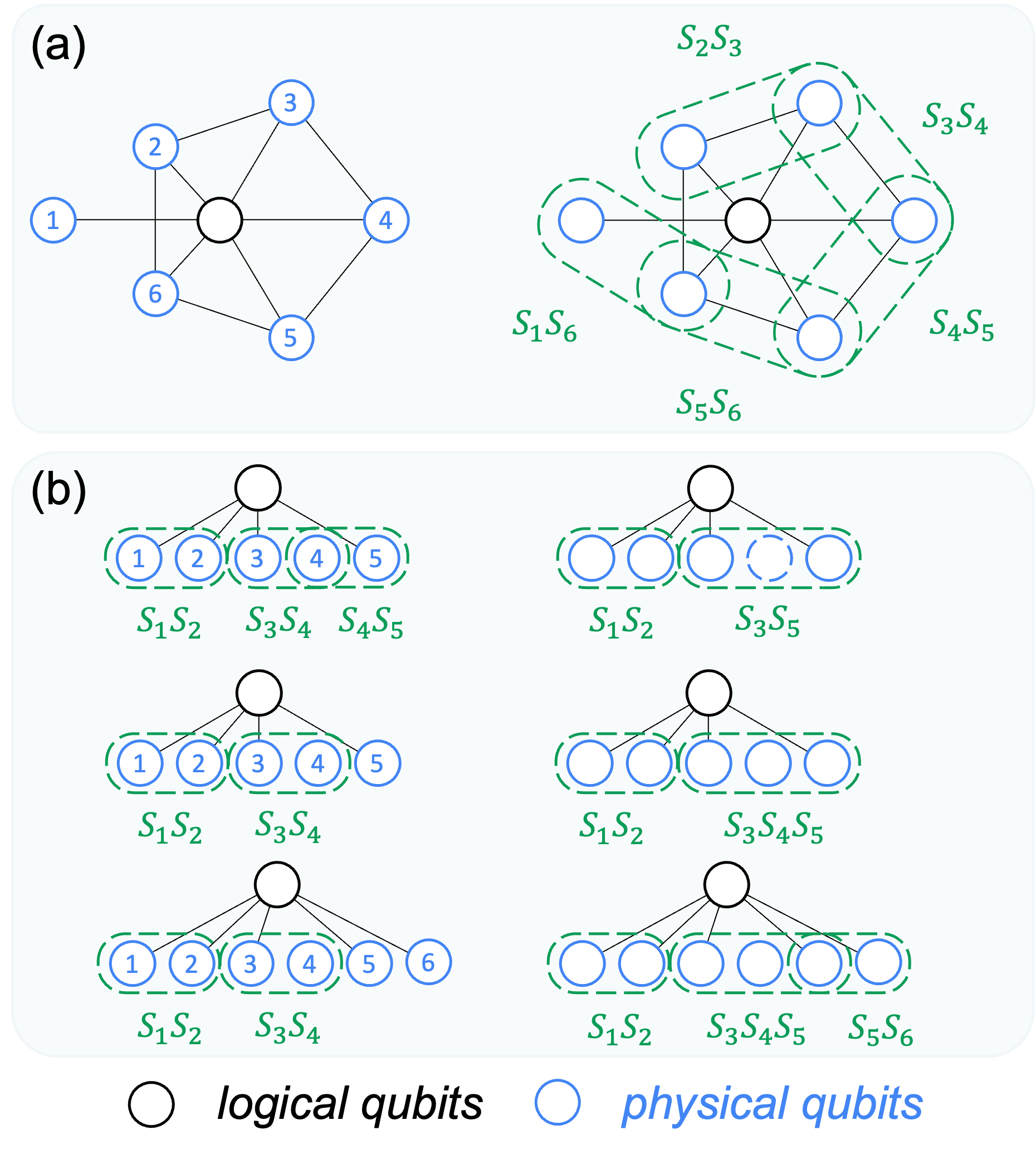}
    \caption{\raggedright 
    Examples of the evolution of generators. 
    (a) $k=1$ case. The left figure illustrates the original graph codes, while the right figure shows the generators corresponding to these original graph codes. 
    Note that the generators defined here differ from those described in the main text; 
    specifically, we do not require ``ascending order'' here.
    (b) $k\geq 2$ case. 
    The left figures depict the generators of $\cS_{G, W'}$, and we examine the impact of the vertex $\bm{w}_k$. 
    The right figures display the new generators obtained after incorporating $\bm{w}_k$. 
    }
    \label{fig:appendix-graph code-step1}
\end{figure}

\begin{itemize}
    \item For the $k=1$ case, the situation is straightforward. 
    If $|{\rm supp}(\bm{w}_1)| = 1$, then $\cS_W = \emptyset$. 
    Otherwise, we construct a set of independent generators $\{S^{\bm{y}}\}$, 
    where $|{\rm supp}(\bm{y})| = 2$ and ${\rm supp}(\bm{y}) \cap \overline{W} = \emptyset$ for each $\bm{y}$. 
    This construction is visualized in Fig.~\ref{fig:appendix-graph code-step1}(a). 
    \item For the $k\geq 2$ case, suppose we have constructed a set of generators with 
    $W' = \langle \bm{w}_1, \cdots, \bm{w}_{k-1}\rangle$, denoted by $\cG_{G, W'} = \cG_{\overline{W'}} \cup \cG_{W'}$. 
    We then need to consider the influence of $\bm{w}_k$, resulting in the following cases:
    \begin{enumerate}
        \item If $|{\rm supp}(\bm{w}_k) \cap \overline{W'}| = 0$, i.e., $\overline{W} = \overline{W'}$, 
        \begin{align}
            \cG_{G, W'}  = \cG_{\overline{W'}} \cup \cG_{W'} 
            \xrightarrow[]{\bm{w}_k} 
            \cG_{\overline{W}} \cup \cG_{W}^o \cup \cG_{W}^e, 
        \end{align}
        where 
        \begin{align}
            \cG_{W}^e 
            &:= \left\{S^{\bm{y}}\in \cG_{W'}: |{\rm supp}(\bm{y}) \cap {\rm supp}(\bm{w}_k)|\text{ is even}\right\} \\
            \cG_{W}^o 
            &:= \left\{ S^{\bm{y}} S^{\bm{y}'}: S^{\bm{y}},S^{\bm{y}'} \in \cG_{W'}\setminus\cG_W^e\right\}. 
        \end{align}
        Note that $\cG_W^o$ contains $|\cG_{W'}\setminus\cG_W^e|-1$ independent generators. 
        \item If $|{\rm supp}(\bm{w}_k) \cap \overline{W'}| = 1$, 
        suppose ${\rm supp}(\bm{w}_k) \cap \overline{W'} = \{a\}$, 
        \begin{align}
            \cG_{G, W'}  = \cG_{\overline{W'}} \cup \cG_{W'} 
            \xrightarrow[]{\bm{w}_k} 
            \cG_{\overline{W}} \cup \cG_{W}^o \cup \cG_{W}^e. 
        \end{align}
        where 
        \begin{align}
            \cG_{W}^e 
            &:= \left\{S^{\bm{y}}\in \cG_{W'}: |{\rm supp}(\bm{y}) \cap {\rm supp}(\bm{w}_k)|\text{ is even}\right\} \\
            \cG_{W}^o 
            &:= \left\{ S_a S^{\bm{y}}: S^{\bm{y}} \in \cG_{W'}\setminus\cG_W^e\right\}. 
        \end{align}
        \item If $|{\rm supp}(\bm{w}_k) \cap \overline{W'}| \geq 2$, 
        select one vertex $a\in {\rm supp}(\bm{w}_k) \cap \overline{W'}$, 
        \begin{align}
            \cG_{G, W'}  = \cG_{\overline{W'}} \cup \cG_{W'} 
            \xrightarrow[]{\bm{w}_k} 
            \cG_{\overline{W}} \cup \cG_{\overline{W'}\setminus\overline{W}} \cup 
            \cG_{W}^o \cup \cG_{W}^e. 
        \end{align}
        where 
        \begin{align}
            \cG_{\overline{W'}\setminus\overline{W}}
            &:= \{S^{\bm{y}}: |{\rm supp}(\bm{y})| = 2 \text{ and } {\rm supp}(\bm{y}) \subset \overline{W'}\setminus\overline{W}\} \\
            \cG_{W}^e 
            &:= \left\{S^{\bm{y}}\in \cG_{W'}: |{\rm supp}(\bm{y}) \cap {\rm supp}(\bm{w}_k)|\text{ is even}\right\} \\
            \cG_{W}^o 
            &:= \left\{ S_a S^{\bm{y}}: S^{\bm{y}} \in \cG_{W'}\setminus\cG_W^e\right\}. 
        \end{align}
        Note that $\cG_{\overline{W'}\setminus\overline{W}}$ contains independent $|\overline{W'}\setminus\overline{W}|-1$ generators. 
    \end{enumerate}
    This construction can be visualized in Fig.~\ref{fig:appendix-graph code-step1}(b). 
\end{itemize}
As observed, for each generator associated with $W$, the number of generators in the graph code is reduced by $1$. 
Consequently, we ultimately obtain $n-k$ generators. 

\subsection{Graph structure of the graph code}

In this subsection, we analyze the graph structure of the graph code.
We define a new graph $G' = (V', E')$, where $|V'| = n-k$, and each vertex represents a stabilizer generator, as defined in Eq.~\eqref{eq:new stabilizer generators}. 
We now need to determine how to connect these vertices, labeled as $\{\bm{y}\}$. 
To clarify our reconnection rules, we first explain the physical meaning of an edge in the graph. 

For a graph state $\ket{G}$ and its corresponding graph $G$, 
$a$-th and $b$-th qubits are connected in $G$ (i.e., $(a,b)\in E$), 
if and only if the corresponding stabilizer generators $S_a$ and $S_b$ anti-commute on these two qubits. 
Similarly, in the graph $G'$, $(\bm{y},\bm{y}')\in E'$ if and only if $S^{\bm{y}}$ and $S^{\bm{y}'}$ anti-commute on the related qubits. This can be verified as follows:
\begin{enumerate}
    \item For $a\in{\rm supp}(\bm{y})$ and $\bm{y}'$, define the function
        \begin{align}
            f(a, \bm{y'}) 
            := |\{b\in{\rm supp}(\bm{y'}): (a,b)\in E\}|. 
            \label{eq:edge count in graph}
        \end{align}
        This function counts the number of $b\in{\rm supp}(\bm{y'})$ that are connected to $a$ in $G$. 
    \item $(\bm{y},\bm{y}')\in E'$ if one of the following conditions holds: 
    \begin{enumerate}
        \item $\exists\;a\in {\rm supp}(\bm{y})\cap{\rm supp}(\bm{y}')$ such that $f(a,\bm{y}) + f(a,\bm{y}')$ is odd; 
        \item $\exists\;a\in {\rm supp}(\bm{y})$ but $\not\in{\rm supp}(\bm{y}')$ such that $f(a,\bm{y}')$ is odd; 
        \item $\exists\;a\in {\rm supp}(\bm{y}')$ but $\not\in{\rm supp}(\bm{y})$ such that $f(a,\bm{y})$ is odd. 
    \end{enumerate}
\end{enumerate}

Specially, for the $k=1$ case, the rules of connection can be summarized as follows. 
Define $\delta_{a,b} = 1$ if $(a,b)\in E$; otherwise $\delta_{a,b}=0$. 
The edges in $G'$ are then determined as follows, 
\begin{enumerate}
    \item If ${\rm supp}(\bm{y}_1) = \{a\}$ and ${\rm supp}(\bm{y}_2) = \{b\}$, $(\bm{y}_1,\bm{y}_2)\in E'$ if $\delta_{a,b}=1$; 
    \item If ${\rm supp}(\bm{y}_1) = \{a\}$ and ${\rm supp}(\bm{y}_2) = \{b,\bar{b}\}$, $(\bm{y}_1, \bm{y}_2)\in E'$ if $\delta_{a,b} = 1$ or $\delta_{a,\bar{b}}=1$; 
    \item If ${\rm supp}(\bm{y}_1) = \{a,\bar{a}\}$ and ${\rm supp}(\bm{y}_2) = \{b,\bar{b}\}$, $(\bm{y}_1,\bm{y}_2)\not\in E'$ if at least one of the following two conditions holds: 
     \begin{enumerate}
        \item ${\rm supp}(\bm{y}_1)\cap {\rm supp}(\bm{y}_2)=\emptyset$ and $\delta_{a,b} = \delta_{a,\bar{b}} = \delta_{\bar{a},b} = \delta_{\bar{a},\bar{b}}$; 
        \item ${\rm supp}(\bm{y}_1)\cup {\rm supp}(\bm{y}_2)=\{a,b,c\}$ and $\delta_{a,b} = \delta_{a,c} = \delta_{b,c}$. 
    \end{enumerate}
\end{enumerate}
\begin{figure}[!htbp]
    \centering
    \includegraphics[width=0.8\linewidth]{./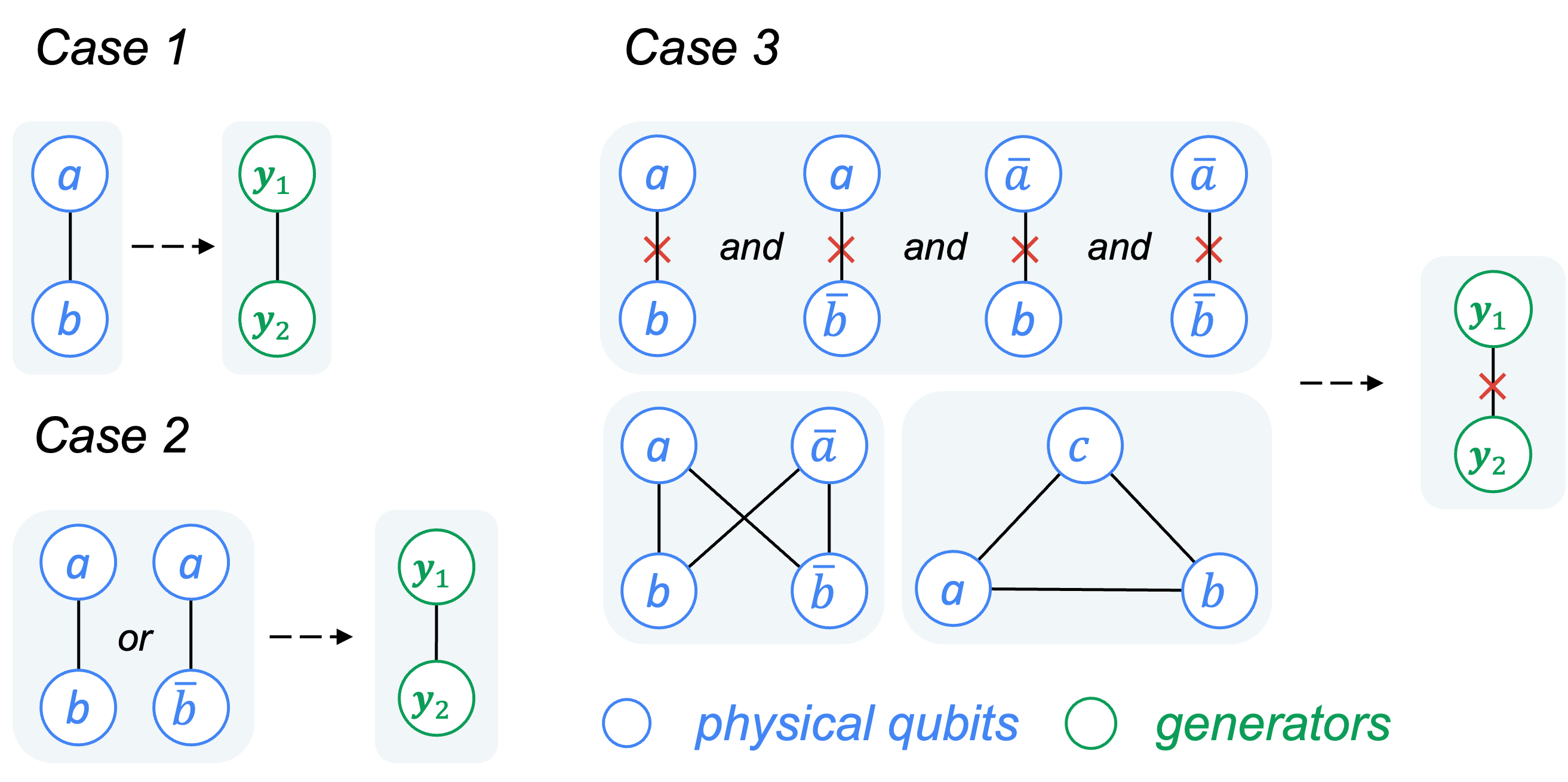}
    \caption{\raggedright The connection rules of each two vertices when $k=1$. 
    There are three cases. 
    For each case, the edges in the new graph $G'$ (right one) is determined by the edges in the original graph $G$ (left one).
    }
    \label{fig:appendix-graph code-step2}
\end{figure}

\subsection{Coloring strategy}  

With the above analysis, the graph subspace $\cV$ can be described using a new graph with $n-k$ vertices. 
Inspired by the coloring strategy initially applied to hypergraph states~\cite{zhu2019efficientb}, we propose a verification strategy for this graph subspace, which we term the \emph{coloring strategy}. This strategy is based on the coloring problem of the graph. 
Let $\mathscr{A} = \{A_1, A_2, \cdots, A_m\}$ be an independence cover of $G'$,  composed of $m$ nonempty independent sets, such that 
$\bigcup_\ell A_\ell = \{\bm{y}\}$ and $A_\ell\cap A_\iota = \emptyset$ for all $\ell\neq \iota$. 
For each set $A_\ell$, we construct a test operator $P_\ell$ as follows:
\begin{enumerate}
    \item For $\bm{y}\in A_\ell$ and $a\in{\rm supp}(\bm{y})$, 
    if $f(a,\bm{y})$ is even, perform an $X$ measurement on the $a$-th qubit; 
    otherwise, perform a $Y$ measurement on the $a$-th qubit, 
    where $f(\cdot)$ is defined in Eq.~\eqref{eq:edge count in graph}.
    \item For other qubits, perform $Z$ measurements. 
    \item The state passes only if the measurement result of $S^{\bm{y}}$ is $+1$ for all $\bm{y}\in A_\ell$. 
\end{enumerate}
Therefore, the corresponding test operator is defined as 
\begin{align}
    P_\ell = \prod_{\bm{y}\in A_\ell} \frac{\1 + S^{\bm{y}}}{2}. 
\end{align}
Suppose the test $P_\ell$ is performed with probability $\mu_\ell$. 
Our coloring strategy is characterized by the weighted independence cover $(\mathscr{A}, \mu)$. 
The corresponding verification operator is given by 
\begin{align}
    \Omega(\mathscr{A}, \mu) = \sum_{\ell} \mu_\ell P_\ell. 
    \label{eq:verification operator of coloring protocol}
\end{align}

\subsubsection{Single logical qubit case} 

\begin{figure}[!htbp]
    \centering
    \includegraphics[width=0.8\linewidth]{./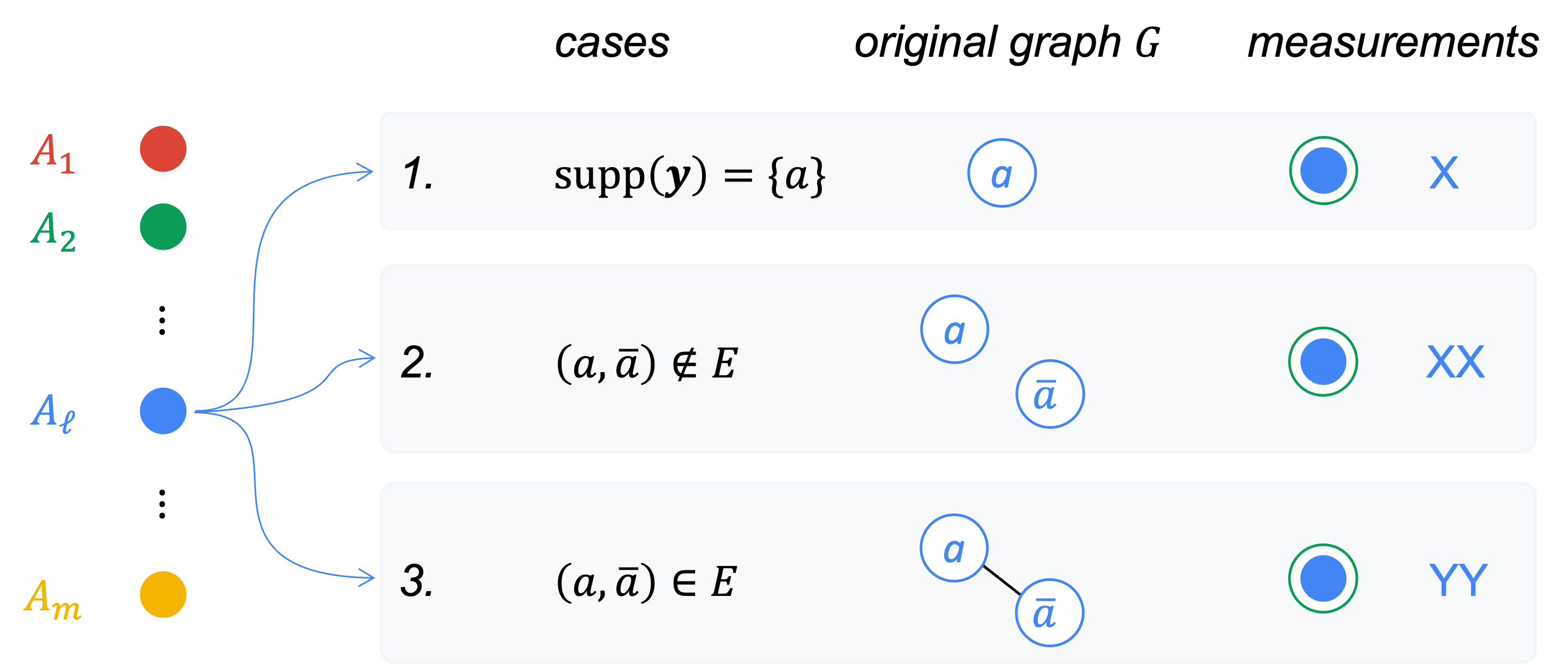}
    \caption{\raggedright The measurement settings for different independent set $A_\ell$ when $k=1$. 
    For each $A_\ell$, The measurement settings are depend on original graph $G$, as shown in black vertices.
    We perform $X$ measurements in first two cases, and perform $Y$ measurements in the last one. 
    For other not involved qubits, we perform $Z$ measurements. 
    }
    \label{fig:appendix-graph code-step3}
\end{figure}

Here, we show the coloring strategy for $k=1$ case concretely. 
For each independent set $A_\ell$, we can construct the following measurement setting, 
\begin{enumerate}
    \item If ${\rm supp}(\bm{y})=\{a\}$, perform $X$ measurement on the $a$-th qubit; 
    \item If ${\rm supp}(\bm{y})=\{a,\bar{a}\}$ and $\delta_{a,\bar{a}}=1$, 
    perform $Y$ measurements on $a$-th and $\bar{a}$-th qubits; 
    If ${\rm supp}(\bm{y})=\{a,\bar{a}\}$ and $\delta_{a,\bar{a}}=0$, 
    perform $X$ measurements on these two qubits. 
    \item Perform $Z$ measurement on other not involved qubits.
\end{enumerate}
A visual illustration can be found in Fig.~\ref{fig:appendix-graph code-step3}.

\subsection{Efficiency of the coloring strategy}

To analyze the efficiency of the coloring strategy, we first introduce the following lemma, which will be instrumental in the subsequent analysis. 

\begin{lemma}\label{lemma:product of commute operators}
    Given a set of stabilizer generators $\{S_1, \cdots, S_m\}$ with $[S_i, S_j] = 0$ for all $i,j\in[m]$, we have 
    \begin{align}
        \left[\prod_{i=1}^m \left(\1 + S_i\right)\right]
        \left(\1 + \prod_{i=1}^m S_i\right) 
        = 2 \prod_{i=1}^m \left(\1 + S_i\right). 
    \end{align}
\end{lemma}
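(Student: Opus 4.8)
The plan is to exploit the two defining algebraic properties of stabilizer generators: each $S_i$ is a Hermitian involution, so $S_i^2 = \1$, and by hypothesis the $S_i$ mutually commute. Writing $P := \prod_{i=1}^m (\1 + S_i)$ for the left factor, the central fact I would establish first is that $P$ is left invariant under multiplication by any single generator,
\begin{align}
    S_j P = P, \qquad \forall\, j \in [m].
\end{align}
To prove this, I would use commutativity to bring the factor $(\1 + S_j)$ adjacent to the leading $S_j$; since $[S_i, S_j]=0$ this reordering does not alter the product. Then the elementary identity $S_j(\1 + S_j) = S_j + S_j^2 = S_j + \1 = \1 + S_j$ absorbs the extra $S_j$ and leaves $P$ unchanged.

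Next I would iterate this invariance over all generators. Since $\prod_{i=1}^m S_i$ is an ordered product $S_1 S_2 \cdots S_m$, applying $S_j P = P$ from the rightmost factor inward gives, by a short induction, $\left(\prod_{i=1}^m S_i\right) P = P$; equivalently, using commutativity of the $S_i$ with every factor of $P$, we also have $P \prod_{i=1}^m S_i = P$. Finally I would expand the right-hand factor appearing in the statement,
\begin{align}
    P \left(\1 + \prod_{i=1}^m S_i\right)
    = P + P \prod_{i=1}^m S_i
    = P + P = 2P,
\end{align}
which is precisely the claimed identity $2\prod_{i=1}^m (\1 + S_i)$.

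I expect no genuine obstacle here. The only point requiring a little care is the justification that $S_j$ may be commuted through the remaining factors $(\1 + S_i)$ with $i \neq j$, which is immediate from $[S_i, S_j]=0$ and hence $[\1 + S_i, S_j]=0$. The involution property $S_i^2 = \1$, intrinsic to Pauli stabilizer operators, is the engine that makes each factor behave idempotently under its own generator, and it is what ultimately collapses the second factor $(\1 + \prod_i S_i)$ to a factor of $2$ on the common $+1$ eigenspace selected by $P$.
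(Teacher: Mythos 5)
Your proof is correct, and it takes a cleaner route than the paper's. The paper proves the identity by induction on $m$: the base case is $(\1+S_1)^2 = 2(\1+S_1)$, and the inductive step is a telescoping manipulation that shuffles $S_k$ through the product and invokes $(S_k - \1)(\1 + S_k) = 0$ to kill the residual term. You instead isolate the single absorption identity $S_j P = P$ for $P := \prod_{i}(\1+S_i)$, which follows immediately from commutativity plus $S_j(\1+S_j) = \1+S_j$, and then iterate it over the factors of $\prod_i S_i$ to get $P\prod_i S_i = P$, so that the second factor contributes $P + P = 2P$. Both arguments rest on exactly the same two facts (mutual commutativity and $S_i^2 = \1$, the latter implicit in the term ``stabilizer generator'' in both your write-up and the paper's), but your organization replaces the induction with a one-line invariance statement and makes the mechanism transparent: $P$ projects (up to normalization) onto the joint $+1$ eigenspace, on which every $S_i$, and hence their product, acts as the identity. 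The paper's induction buys nothing extra here; your version is shorter and, if anything, easier to verify.
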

\begin{proof}
    For $m=1$ case, we have 
    \begin{align}
        \left(\1 + S_1\right) \left(\1 + S_1\right) = 2 \left(\1 + S_1\right). 
    \end{align}
    Thus, this lemma holds when $m=1$. 
    Now, assume that this lemma holds when $m=k-1$, we want to prove that it is hold for $m=k$ case. Then, for $m=k$ case, we have 
    \begin{align}
        \left[\prod_{i=1}^k \left(\1 + S_i\right)\right]
        \left(\1 + \prod_{i=1}^k S_i\right) 
        &= \left[\prod_{i=1}^{k-1} \left(\1 + S_i\right)\right]
        \left(\1 + S_k\right)
        \left(\1 + \prod_{i=1}^{k-1} S_i \cdot S_k\right) \\
        &= \left[\prod_{i=1}^{k-1} \left(\1 + S_i\right)\right]
        \left(\1 + S_k\right) S_k 
        \left(S_k + \prod_{i=1}^{k-1} S_i\right) \\
        &= \left[\prod_{i=1}^{k-1} \left(\1 + S_i\right)\right]
        \left(S_k + \prod_{i=1}^{k-1} S_i\right)
        \left(\1 + S_k\right) \\
        &= \left[\prod_{i=1}^{k-1} \left(\1 + S_i\right)\right]
        \left[\left(\1 + \prod_{i=1}^{k-1} S_i\right) + \left(S_k - \1\right) \right]
        \left(\1 + S_k\right) \\
        &= 2 \prod_{i=1}^{k}\left(\1 + S_i\right), 
    \end{align}
    with the fact that $(S_k - \1)(\1 + S_k) = 0$. 
    Therefore, the lemma holds for $m=k$ case. 
\end{proof}

Similar to previous discussions, we first prove the condition for perfect completeness.
For all $\ell\in[m]$, we have 
\begin{align}
    \tr\left[\Pi_{\cV} P_\ell\right] 
    &= \tr\left[\left(\prod_{\bm{y}} \frac{\1 + S^{\bm{y}}}{2}\right)
    \left(\prod_{\bm{y}\in A_\ell}\frac{\1+S^{\bm{y}}}{2}\right)\right] \\
    &= \tr[\Pi_{\cV}] = \rank(\Pi_{\cV}), 
\end{align}
where we use Lemma~\ref{lemma:product of commute operators}. 
Then, we consider the spectral gap of $\Omega(\mathscr{A}, \mu)$ defined in Eq.~\eqref{eq:verification operator of coloring protocol}. 
We define a set of stabilizer generators 
\begin{align}
    \cK = \{\underbrace{K_1, \cdots, K_{n-k}}_{\text{generators of }\mathscr{G}, \{S^{\bm{y}}\}}, 
    K_{n-k+1}, \cdots, K_n\}. 
\end{align}
Based on these generators, we can define a set of orthogonal states $\{\ket{G_{\bm{x}}}\}$, where 
\begin{align}
    \ketbra{G_{\bm{x}}} = \prod_{i}^n \frac{\1 + (-1)^{\bm{x}_i}K_i}{2}. 
\end{align}
The state $\ketbra{G_{\bm{x}}}\in \mathscr{D}(\cV)$ if and only if ${\rm supp}(\bm{x})\cap[n-k] = \emptyset$. 
Define 
\begin{align}
    A_\ell' := \{i: S^{\bm{y}}=K_i, \bm{y}\in A_\ell\} \text{ with }
    \bigcup_{\ell} A_\ell' = [n-k] \text{ and }
    A_\ell \cap A_\iota = \emptyset \text{ for all $\ell\neq\iota$}. 
\end{align}
For a fixed $P_\ell$, we have  
\begin{align}
    \bra{G_{\bm{x}}}P_\ell\ket{G_{\bm{x}}} 
    &= \tr\left[
        \left(\prod_{i=1}^n \frac{\1 + (-1)^{\bm{x}_i}K_i}{2}\right)
        \left(\prod_{\bm{y}\in A_\ell} \frac{\1 + S^{\bm{y}}}{2}\right)
    \right] \\
    &= \tr\left[
        \left(\prod_{i=1}^n \frac{\1 + (-1)^{\bm{x}_i}K_i}{2}\right)
        \left(\prod_{i\in A_\ell'} \frac{\1 + K_i}{2}\right)
    \right] \\
    &= \begin{cases}
        1, & \text{if ${\rm supp}(\bm{x})\subseteq \bar{A}_\ell'$} \\
        0, & \text{otherwise}
    \end{cases}, 
\end{align}
where $\bar{A}_\ell' = [n] \setminus A_\ell'$. 
Therefore, for a fixed $\bm{x}$, we have 
\begin{align}
    \bra{G_{\bm{x}}}\Omega(\mathscr{A}, \mu)\ket{G_{\bm{x}}} 
    = \bra{G_{\bm{x}}}\sum_\ell \mu_\ell P_\ell\ket{G_{\bm{x}}} 
    = \sum_{\ell: {\rm supp}(\bm{x})\subseteq \bar{A}_\ell'} \mu_\ell. 
\end{align}
Define arbitrary pure state in $\cV^\bot$ as 
\begin{align}
    \ket{\Psi^\bot} = \sum_{\bm{x}:{\rm supp}(\bm{x})\cap[n-k]\neq\emptyset} 
    \alpha_{\bm{x}} \ket{G_{\bm{x}}}, \quad 
    \sum |\alpha_{\bm{x}}|^2 = 1, 
\end{align}
then the spectral gap can be represented as 
\begin{align}
    \nu(\Omega(\mathscr{A}, \mu))
    &= 1 - \max_{\ket{\Psi^\bot}} \bra{\Psi^\bot} \Omega(\mathscr{A}, \mu) \ket{\Psi^\bot} \\
    &= 1 - \max_{\{\alpha_{\bm{x}}\}} \left(\sum_{\bm{x}} 
    |\alpha_{\bm{x}}|^2 
    \bra{G_{\bm{x}}} \Omega(\mathscr{A}, \mu)\ket{G_{\bm{x}}}
    \right) \\
    &= 1 - \max_{\bm{x}} \left(
        \sum_{\ell: {\rm supp}(\bm{x})\subseteq \bar{A}_\ell} \mu_\ell\right). 
\end{align}
For the above maximization problem, it suffices to consider the case in which $|{\rm supp}(\bm{x})\cap[n-k]| = 1$. 
Therefore, 
\begin{align}
    \nu(\Omega(\mathscr{A}, \mu)) 
    = 1 - \max_{\ell} \left(1-\mu_\ell\right)
    = \min_{\ell} \mu_{\ell}. 
\end{align}
With the previous analysis, we want to maximize the spectral gap, 
\begin{align}
    \max_{\mu} \nu(\Omega(\mathscr{A}, \mu)) 
    = \max_{\mu}\min_{\ell} \mu_\ell = \frac{1}{m}, 
\end{align}
when $\mu_1 = \mu_2 = \cdots = \mu_m = \frac{1}{m}$. 
Therefore, we can define  
\begin{align}
    \Omega_{\mathscr{G}} := \frac{1}{m} \sum_{\ell} P_\ell, \quad 
    \nu(\Omega_{\mathscr{G}}) = \frac{1}{m}. 
\end{align}

\subsection{Examples of graph subspace verification}
\label{app:examples of graph subspace verification}

We present specific verification strategies for subspaces induced by various well-known graph codes: 
the $[\![4, 1, 2]\!]$ code, the five-qubit code, the $[\![7, 1, 3]\!]$ code, and the $[\![8, 3, 3]\!]$ code. 

\subsubsection{The $[\![4, 1, 2]\!]$ graph code subspace}
An $[\![4, 1, 2]\!]$ graph code is illustrated in Fig.~\ref{fig:graph example 412}. 
Obviously, we have $W = \langle \bm{w} = 1100\rangle$, and the set of generators is 
\begin{align}
    \cG_{\mathscr{G}} = \{S_1S_2, S_3, S_4\}. 
\end{align}

\begin{figure}[!htbp]
    \centering
    \includegraphics[width=0.8\linewidth]{./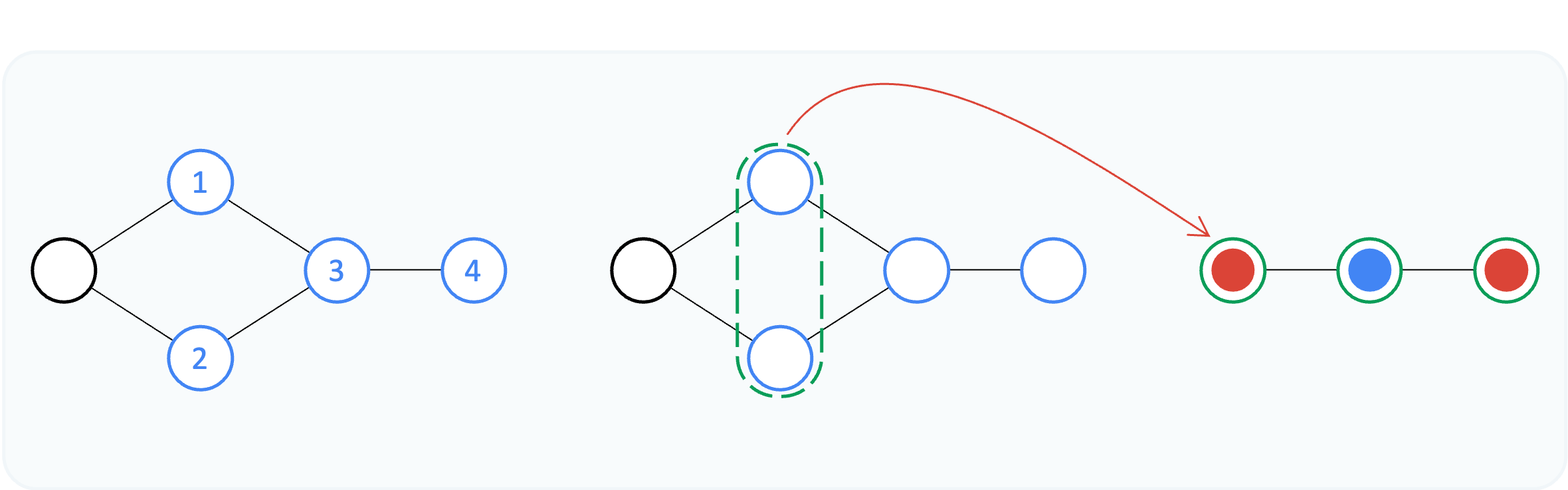}
    \caption{The coloring strategy for the $[\![4, 1, 2]\!]$ graph code. }
    \label{fig:graph example 412}
\end{figure}

As depicted in Fig.~\ref{fig:graph example 412}, a new graph can be constructed where the vertices represent the generators. 
Using a coloring strategy, we can verify the corresponding graph subspace with $2$ measurement settings: 
\red{\texttt{XXZX}} and \blue{\texttt{ZZXZ}}. 
The corresponding test operators are 
\begin{align}
    \red{P_1 = \left(\frac{\1 + S_1S_2}{2}\right)\left(\frac{\1 + S_4}{2}\right)}, \quad 
    \blue{P_2 = \left(\frac{\1 + S_3}{2}\right)}. 
\end{align}
Finally, the verification operator is 
\begin{align}
    \Omega_{\mathscr{G}} = \frac{1}{2} \sum_{\ell = 1}^2 P_\ell. 
\end{align}

\subsubsection{The five-qubit graph code subspace}

The five-qubit code is the smallest code capable of correcting single-qubit errors on a logical qubit~\cite{baccari2020deviceindependent}. 
It is also a graph code, as illustrated in Fig.~\ref{fig:five-qubit code}, 
where blue circles represent physical qubits, and the black circle represents the logical qubit. 
From the graph, we define $W = \langle \bm{w}=11111 \rangle$, and the set of generators is 
\begin{align}
    \cG_{\mathscr{G}} = \{S_1S_2, S_2S_3, S_3S_4, S_4S_5\}. 
\end{align}

\begin{figure}[!htbp]
    \centering
    \includegraphics[width=0.7\linewidth]{./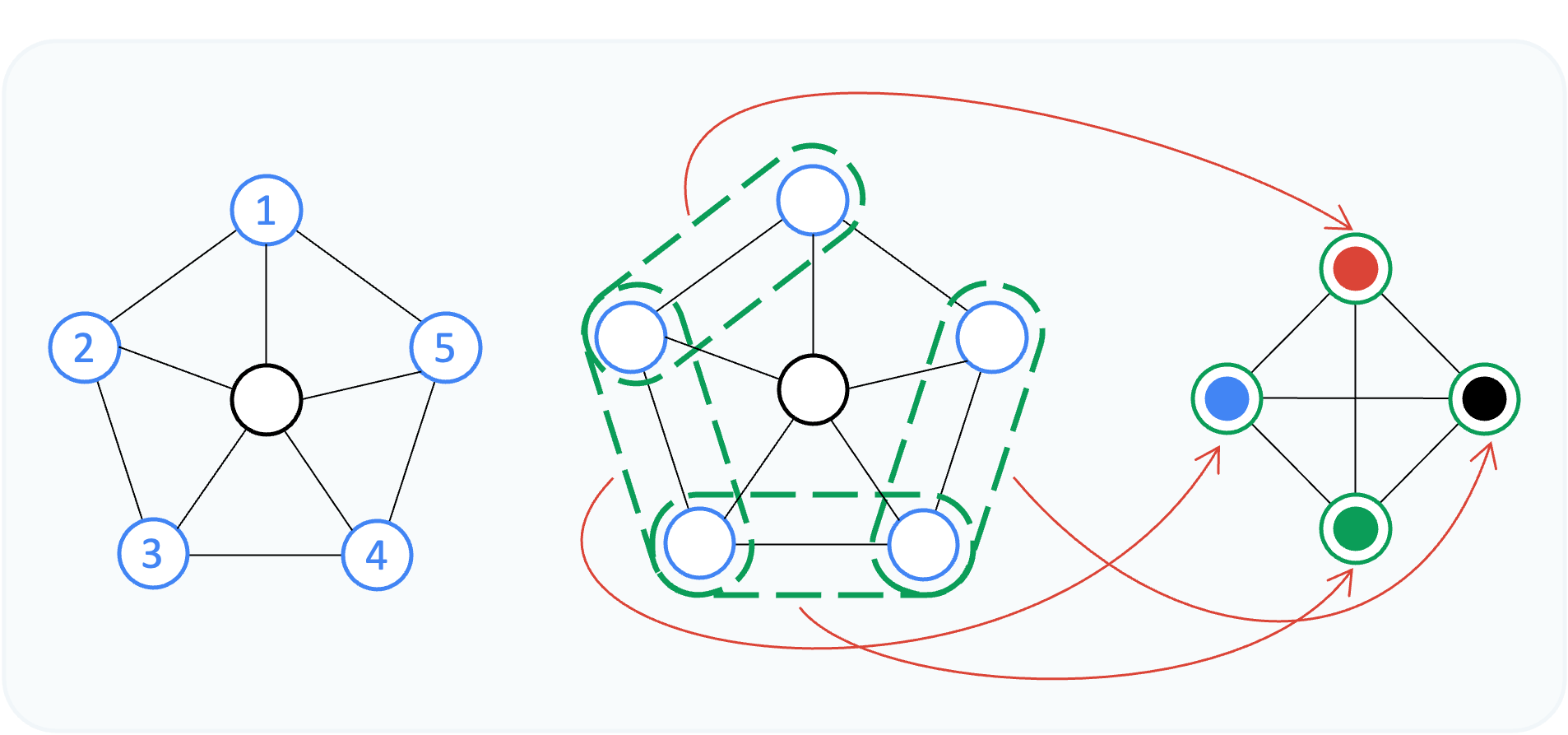}
    \caption{The coloring strategy for the five-qubit code subspace. }
    \label{fig:five-qubit code}
\end{figure}

Next, we define a new graph $G'$, where each vertex corresponds to a generator. 
Following the previous analysis, we implement a coloring strategy with $m=4$ measurement settings: 
\red{\texttt{YYZZZ}}, \blue{\texttt{ZYYZZ}}, \green{\texttt{ZZYYZ}}, and \texttt{ZZZYY}. 
The corresponding test operators are 
\begin{align}
    \red{P_1 = \frac{\1 + S_1S_2}{2}}, \quad 
    \blue{P_2 = \frac{\1 + S_2S_3}{2}}, \quad 
    \green{P_3 = \frac{\1 + S_3S_4}{2}}, \quad 
    P_4 = \frac{\1 + S_4S_5}{2}. 
\end{align}
Finally, the verification operator is given by 
\begin{align}
    \Omega_{\mathscr{G}} 
    = \frac{1}{4}\sum_{\ell=1}^4 P_\ell. 
\end{align}

\subsubsection{The $[\![7, 1, 3]\!]$ graph code subspace}

An $[\![7, 1, 3]\!]$ graph code is illustrated in Fig.~\ref{fig:graph example 713}. 
Obviously, we have $W = \langle \bm{w} = 1010100\rangle$, and the set of generators is 
\begin{align}
    \cG_{\mathscr{G}} = \{S_1S_3, S_1S_5, S_2, S_4, S_6, S_7\}. 
\end{align}

\begin{figure}[!htbp]
    \centering
    \includegraphics[width=0.8\linewidth]{./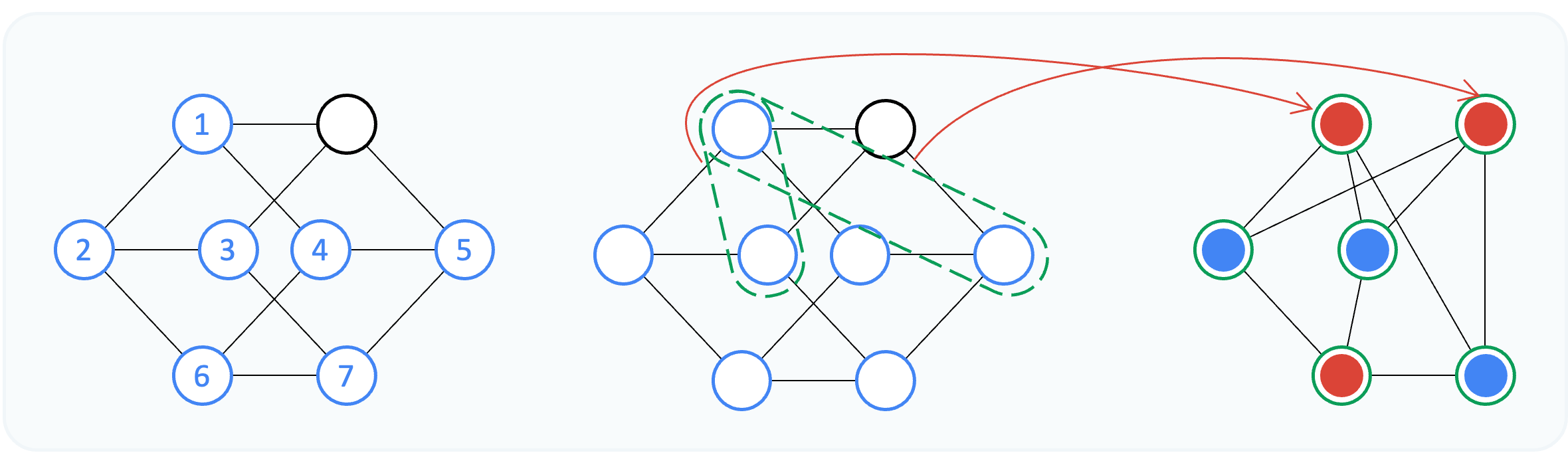}
    \caption{The coloring strategy for the $[\![7, 1, 3]\!]$ graph code. }
    \label{fig:graph example 713}
\end{figure}

As depicted in Fig.~\ref{fig:graph example 713}, a new graph can be constructed where the vertices represent the generators. 
Using a coloring strategy, we can verify the corresponding graph subspace with $2$ measurement settings: 
\red{\texttt{XZXZXXZ}} and \blue{\texttt{ZXZXZZX}}. 
The corresponding test operators are 
\begin{align}
    \red{P_1 = \left(\frac{\1 + S_1S_4}{2}\right)\left(\frac{\1 + S_1S_5}{2}\right)\left(\frac{\1 + S_6}{2}\right)}, \quad 
    \blue{P_2 = \left(\frac{\1 + S_2}{2}\right)\left(\frac{\1 + S_4}{2}\right)\left(\frac{\1 + S_7}{2}\right)}. 
\end{align}
Finally, the verification operator is 
\begin{align}
    \Omega_{\mathscr{G}} = \frac{1}{2} \sum_{\ell = 1}^2 P_\ell. 
\end{align}

\subsubsection{The $[\![8, 3, 3]\!]$ graph code subspace}

Here, we consider a subspace induced by an $[\![8, 3, 3]\!]$ graph code~\cite{cafaro2014schemea}. 
Similarly, we first obtain the generators of this code. 
Following the steps illustrated in Fig.~\ref{fig:graph example generators}(a), we derive the set of generators: 
\begin{align}
    \cG_{\mathscr{G}} = \{S_1S_8, S_1S_4S_6S_7S_8, S_2S_7, S_3S_6, S_4S_5\}.
\end{align}

\begin{figure}[!htbp]
    \centering
    \includegraphics[width=0.8\linewidth]{./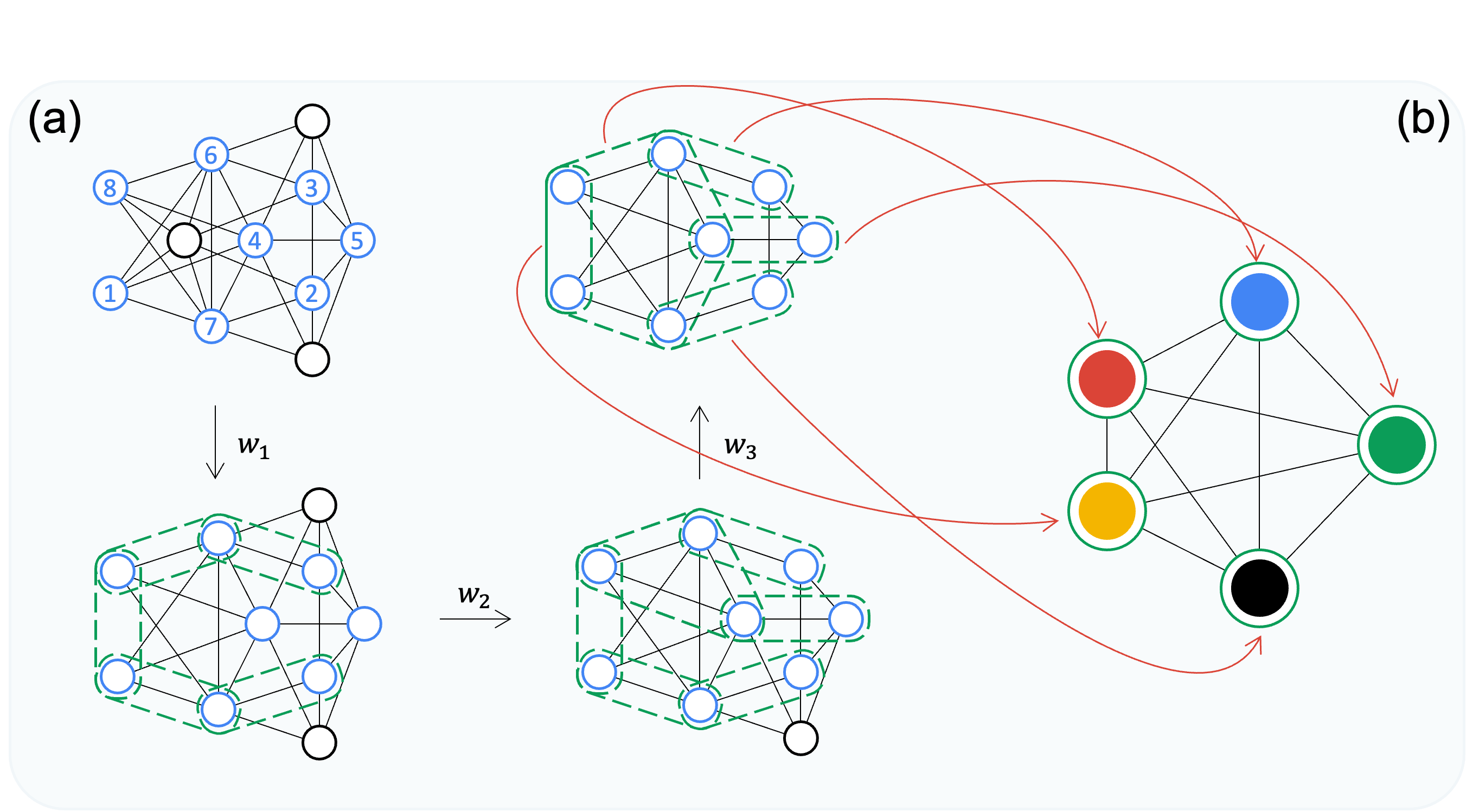}
    \caption{The coloring strategy for the $[\![8, 3, 3]\!]$ graph code. 
    (a) The generator of this graph code. 
    (b) Coloring strategy. }
    \label{fig:graph example generators}
\end{figure}

Using the new graph $G'$, shown in Fig.~\ref{fig:graph example generators}(b), 
we apply a coloring strategy with $m=5$ measurement settings: 
\red{\texttt{YZZXZXXY}}, \blue{\texttt{ZZYZZYZZ}}, \green{\texttt{ZZZYYZZZ}}, \texttt{ZYZZZZYZ}, and \yellow{\texttt{XZZZZZZX}}. 
The corresponding test operators are 
\begin{align}
    \red{P_1 = \frac{\1 + S_1S_4S_6S_7S_8}{2}}, \quad 
    \blue{P_2 = \frac{\1 + S_3S_6}{2}}, \quad 
    \green{P_3 = \frac{\1 + S_3S_4}{2}}, \quad 
    P_4 = \frac{\1 + S_2S_7}{2}, \quad 
    \yellow{P_5 = \frac{\1 + S_1S_8}{2}}. 
\end{align}
Finally, the verification operator is given by 
\begin{align}
    \Omega_{\mathscr{G}} 
    = \frac{1}{5}\sum_{\ell=1}^5 P_\ell. 
\end{align}

\section{Proof of CSS subspace verification}
\label{app:proof of the CSS subspace}

A CSS code can be described by two matrices 
$H_X\in\bZ_2^{k_X\times n}$ and $H_Z\in\bZ_2^{k_Z\times n}$, 
where $H_X H_Z^T = \bm{0}$~\cite{mackay2004sparse}. 
There are $k_X+k_Z$ stabilizer generators: $X^{\bm{c}_X}$ and $Z^{\bm{c}_Z}$, 
where $P^{\bm{c}}:=\bigotimes_{i\in[n]} P^{c_i}$, 
$\bm{c}_X$ and $\bm{c_Z}$ are the rows of $H_X$ and $H_Z$ respectively. 
Such a CSS code can determine a CSS subspace $\cV_{\rm CSS}$ with dimension $2^k$ with $k:= n-(k_X+k_Z)$, which projector is 
\begin{align}
    \Pi_{\rm CSS} := 
    \left(\prod_{\bm{c}_X} \frac{\1 + X^{\bm{c}_X}}{2}\right) 
    \left(\prod_{\bm{c}_Z} \frac{\1 + Z^{\bm{c}_Z}}{2}\right). 
\end{align}
The corresponding verification operator is defined as 
\begin{align}
    \Omega_{\rm CSS}^{\rm XZ} = \frac{1}{2} 
    \left(\prod_{\bm{c}_X} \frac{\1 + X^{\bm{c}_X}}{2} + 
    \prod_{\bm{c}_Z} \frac{\1 + Z^{\bm{c}_Z}}{2}\right). 
\end{align}
Firstly, we prove the perfect completeness condition defined in Lemma~\ref{lemma:perfect completeness condition} as follows, 
\begin{align}
    \tr[\Pi_{\rm CSS} \Omega_{\rm CSS}^{\rm XZ}] 
    &= \frac{1}{2} \tr\left[
    \left(\prod_{\bm{c}_X} \frac{\1 + X^{\bm{c}_X}}{2}\right) 
    \left(\prod_{\bm{c}_Z} \frac{\1 + Z^{\bm{c}_Z}}{2}\right) 
    \left(\prod_{\bm{c}_X} \frac{\1 + X^{\bm{c}_X}}{2} + 
    \prod_{\bm{c}_Z} \frac{\1 + Z^{\bm{c}_Z}}{2}\right)
    \right] \\
    &= \frac{1}{2} \tr\left[
    \left(\prod_{\bm{c}_X} \frac{\1 + X^{\bm{c}_X}}{2}\right) 
    \left(\prod_{\bm{c}_Z} \frac{\1 + Z^{\bm{c}_Z}}{2}\right) 
    \right]
    = \rank(\Pi_{\rm CSS}). 
\end{align}
Then, we analyze the spectral gap of $\Omega_{\rm CSS}^{\rm XZ}$. 
We define a set of stabilizer generators with size $n$, 
\begin{align}
    \cG_{n} = \{\underbrace{S_1, \cdots, S_{k_X}}_{\{X^{\bm{c}_X}\}_{\bm{c}_X}}, \underbrace{S_{k_X+1}, \cdots, S_{n-k}}_{\{Z^{\bm{c}_Z}\}_{\bm{c}_Z}}, S_{n-k+1}, \cdots, S_{n}\}. 
\end{align}
Then, we can define a set of orthogonal bases $\{\ket{C_{\bm{w}}}\}_{\bm{w}}$, where 
\begin{align}
    \ketbra{C_{\bm{w}}} = \prod_{j=1}^{n} \frac{\1 + (-1)^{w_j} S_j}{2}, \quad 
    \bm{w} \in \bZ_2^n. 
\end{align}
And there is a subset $W\subseteq \bZ_2^n$, for all $\bm{w}\in W$, $\ket{C_{\bm{w}}}\in \cV_{\rm CSS}$. 
In other word, for a fixed $\bm{w}\in W$, the first $k$ bits of it are all zeros. 
Then, we can define arbitrary state in $\cV^\bot_{\rm CSS}$ as 
\begin{align}
    \ket{\Psi^\bot} = \sum_{\bm{w}\in W^\bot} \alpha_w \ket{C_w  }, \quad 
    \sum_{\bm{w}} |\alpha_{\bm{w}}|^2 = 1, 
\end{align}
where $W^\bot = \bZ_2^{n} \setminus W$, and we have 
\begin{align}
    \bra{\Psi^\bot}\Omega_{\rm CSS}^{\rm XZ}\ket{\Psi^\bot} 
    &= \sum_{\bm{w},\bm{w}'\in W^\bot} \alpha_{\bm{w}}^* \alpha_{\bm{w}'} 
    \bra{C_{\bm{w}}} \Omega_{\rm CSS}^{\rm XZ} \ket{C_{\bm{w}'}} \\
    &= \frac{1}{2} \sum_{\bm{w},\bm{w}'\in W^\bot} \alpha_{\bm{w}}^* \alpha_{\bm{w}'} 
    \left(\bra{C_{\bm{w}}} \prod_{i=1}^{k_X} \frac{\1 + S_i}{2} \ket{C_{\bm{w}'}} 
    + \bra{C_{\bm{w}}} \prod_{i=k_X+1}^{n-k} \frac{\1 + S_i}{2} \ket{C_{\bm{w}'}}\right) \\ 
    &= \frac{1}{2} \sum_{\bm{w}\in W^\bot} |\alpha_{\bm{w}}|^2 
    \left(\epsilon_{\bm{w}}^x + \epsilon_{\bm{w}}^z \right), 
\end{align}
where 
\begin{align}
    \prod_{i=1}^{k_X} \frac{\1 + S_i}{2} \ket{C_{\bm{w}}} 
    &= \epsilon_{\bm{w}}^x \ket{C_{\bm{w}}}, \quad 
    \epsilon_{\bm{w}}^x 
    = \begin{cases}
        1 \quad \text{if $\bm{w}_i = 0$ for all $i=1, \cdots, k_X$}\\
        0 \quad \text{else}
    \end{cases},  \\
    \prod_{i=k_X+1}^{k} \frac{\1 + S_i}{2} \ket{C_{\bm{w}}} 
    &= \epsilon_{\bm{w}}^z \ket{C_{\bm{w}}}, \quad 
    \epsilon_{\bm{w}}^z 
    = \begin{cases}
        1 \quad \text{if $\bm{w}_i = 0$ for all $i=k_X+1, \cdots, n-k$}\\
        0 \quad \text{else}
    \end{cases}.   
\end{align}
Therefore, we have 
\begin{align}
    \bra{\Psi^\bot}\Omega_{\rm CSS}^{\rm XZ}\ket{\Psi^\bot} 
    &= \frac{1}{2} \sum_{\bm{w}\in W^\bot} |\alpha_{\bm{w}}|^2 \left(\epsilon_{\bm{w}}^x + \epsilon_{\bm{w}}^z \right) 
    = \frac{1}{2} \sum_{\bm{w}\in W^\bot} |\alpha_{\bm{w}}|^2
    \leq \frac{1}{2},  
\end{align}
with the fact that for all $\bm{w}\in W^\bot$, $\epsilon_{\bm{w}}^x + \epsilon_{\bm{w}}^z \leq 1$.
Additionally, note that the above equality is achievable. 
Thus, we have 
\begin{align}
    \nu(\Omega_{\rm CSS}^{\rm XZ}) = 1 - \frac{1}{2} = \frac{1}{2}. 
\end{align}

Then, we prove that the uniform distribution is the optimal one as follows. 
We define that 
\begin{align}
    \Omega_{{\rm CSS}, \mu}^{\rm XZ} 
    := \mu(X) \prod_{\bm{c}_X} \frac{\1 + X^{\bm{c}_X}}{2} + 
    \mu(Z)\prod_{\bm{c}_Z} \frac{\1 + Z^{\bm{c}_Z}}{2}. 
\end{align}
We have 
\begin{align}
    \bra{\Psi^\bot}\Omega_{{\rm CSS}, \mu}^{\rm XZ}\ket{\Psi^\bot} 
    &= \sum_{\bm{w}\in W^\bot} |\alpha_{\bm{w}}|^2 \left[\mu(X)\epsilon_{\bm{w}}^x + \mu(Z)\epsilon_{\bm{w}}^z \right] \\
    &\leq \max\{\mu(X), \mu(Z)\}. 
\end{align}
The above equality is achievable. 
With the definition of spectral gap, we have 
\begin{align}
    \nu(\Omega_{{\rm CSS}, \mu}^{\rm XZ}) 
    &= 1 - \max_{\ket{\Psi^\bot}}\; \bra{\Psi^\bot}\Omega_{{\rm CSS}, \mu}^{\rm XZ}\ket{\Psi^\bot} \\
    &= 1 - \max\{\mu(X), \mu(Z)\}. 
\end{align}
Therefore, we obtain the optimal performance of verification strategy when $\mu(X) = \mu(Z) = 1/2$, i.e., sampling uniformly. 

\subsection{Example: toric subspace verification}

A typical CSS code is the \emph{toric code}, which is described by a $L\times L$ torus and each edge represents a qubit~\cite{kitaev2003faulttoleranta}.
The corresponding stabilizer generators can be divided into two groups: 
\begin{enumerate}
    \item those associated with each lattice vertex $v$, with $X$ acting on every qubit associated with an edge attached to the given vertex; 
    \item those associated with each plaquette $p$ of the lattice, with $Z$ acting on each qubit represented by an edge surrounding the plaquette. 
\end{enumerate}
Mathematically, they can be written as 
\begin{align}
    S_v = \prod_{i\in v} X_i \quad \text{and} \quad 
    S_p = \prod_{i\in p} Z_i. 
\end{align}
The corresponding subspace is termed as \emph{toric subspace} $\cV_{\rm Toric}$. 
To verify this subspace, we can directly apply $\Omega_{\rm CSS}^{\rm XZ}$ 
using $2$ measurement settings and consuming $2/\epsilon \ln(1/\delta)$ state copies.
Concretely, the verification operator reads 
\begin{align}
    \Omega_{\rm Toric}^{\rm XZ}
    = \frac{1}{2}\left(\prod_v S_v^+ + \prod_p S_p^+\right),
\end{align}
where $S_v^+$ ($S_p^+$) is the projector onto the positive eigenspace of stabilizer generator $S_v$ ($S_p$).

\subsection{Proof of the dual-containing subspace verification}

Here, we consider the dual-containing code, 
which is a special CSS code with $H_X = H_Z$. 
We label the corresponding dual-containing subspace as $\cV_{\rm DC}$ with the projector
\begin{align}
    \Pi_{\rm DC} := 
    \prod_{\bm{c}} \left(\frac{\1 + X^{\bm{c}}}{2}\right) 
    \left(\frac{\1 + Z^{\bm{c}}}{2}\right), 
\end{align}
where $\bm{c}$ are the rows of $H_X$. 
The verification operator reads 
\begin{align}
    \Omega_{\rm DC}^{\rm XYZ} = \frac{1}{3} 
    \left(\prod_{\bm{c}} \frac{\1 + X^{\bm{c}}}{2} + 
    \prod_{\bm{c}} \frac{\1 + Z^{\bm{c}}}{2} + \prod_{\bm{c}} \frac{\1 + X^{\bm{c}}Z^{\bm{c}}}{2}\right). 
\end{align}
With Lemma~\ref{lemma:product of commute operators}, we have 
\begin{align}
    \tr\left[\prod_{\bm{c}} \frac{\1 + X^{\bm{c}}Z^{\bm{c}}}{2} 
    \prod_{\bm{c}} \left(\frac{\1 + X^{\bm{c}}}{2}\right) 
    \left(\frac{\1 + Z^{\bm{c}}}{2}\right) \right]
    &= \rank(\Pi_{\rm DC}) \\
    \Rightarrow\quad 
    \tr[\Omega_{\rm DC}^{\rm XYZ}\Pi_{\rm DC}] 
    &= \rank(\Pi_{\rm DC}). 
\end{align}
Then, we compute the spectral gap as follows. 
Similarly, we can also define define arbitrary state in $\cV^\bot_{\rm DC}$ as $\ket{\Psi^\bot}$ and have 
\begin{align}
    \bra{\Psi^\bot} \Omega_{\rm DC}^{\rm XYZ} \ket{\Psi^\bot} 
    &= \sum_{\bm{w},\bm{w}'\in W^\bot} \alpha_{\bm{w}}^* \alpha_{\bm{w}'} 
    \bra{C_{\bm{w}}} \Omega_{\rm DC}^{\rm XYZ} \ket{C_{\bm{w}'}} \\
    &= \frac{1}{3} \sum_{\bm{w}\in W^\bot} |\alpha_{\bm{w}}|^2 
    \left(\epsilon_{\bm{w}}^x + \epsilon_{\bm{w}}^z + \epsilon_{\bm{w}}^y\right), 
\end{align}
where 
\begin{align}
    \prod_{i=1}^{k_X} \frac{\1 + S_i S_{i+k_X}}{2} \ket{C_{\bm{w}}} 
    &= \epsilon_{\bm{w}}^y \ket{C_{\bm{w}}}, \quad 
    \epsilon_{\bm{w}}^y 
    = \begin{cases}
        1 \quad \text{if $\bm{w}_i = \bm{w}_{i+k_X}$ for all $i=1, \cdots, k_X$}\\
        0 \quad \text{else}
    \end{cases}. 
\end{align}
Therefore, we have 
\begin{align}
    \bra{\Psi^\bot}\Omega_{\rm DC}^{\rm XYZ}\ket{\Psi^\bot} 
    &= \frac{1}{3} \sum_{\bm{w}\in W^\bot} |\alpha_{\bm{w}}|^2 \left(\epsilon_{\bm{w}}^x + \epsilon_{\bm{w}}^z +\epsilon_{\bm{w}}^y \right) 
    = \frac{1}{3} \sum_{\bm{w}\in W^\bot} |\alpha_{\bm{w}}|^2
    \leq \frac{1}{3},  
\end{align}
with the fact that for all $\bm{w}\in W^\bot$, $\epsilon_{\bm{w}}^x + \epsilon_{\bm{w}}^z +\epsilon_{\bm{w}}^y \leq 1$.
Additionally, note that the above equality is achievable. 
Thus, we have 
\begin{align}
    \nu(\Omega_{\rm DC}^{\rm XYZ}) = 1 - \frac{1}{3} = \frac{2}{3}. 
\end{align}

Then, we can also prove that the uniform distribution is the optimal one as follows. 
We define that 
\begin{align}
    \Omega_{{\rm DC}, \mu}^{\rm XYZ} 
    := \mu(X) \prod_{\bm{c}_X} \frac{\1 + X^{\bm{c}_X}}{2} + 
    \mu(Z)\prod_{\bm{c}_Z} \frac{\1 + Z^{\bm{c}_Z}}{2} +
    \mu(Y) \prod_{\bm{c}} \frac{\1 + X^{\bm{c}}Z^{\bm{c}}}{2}. 
\end{align}
We have 
\begin{align}
    \bra{\Psi^\bot}\Omega_{{\rm DC}, \mu}^{\rm XYZ}\ket{\Psi^\bot} 
    &= \sum_{\bm{w}\in W^\bot} |\alpha_{\bm{w}}|^2 \left[\mu(X)\epsilon_{\bm{w}}^x + \mu(Z)\epsilon_{\bm{w}}^z + \mu(Y)\epsilon_{\bm{w}}^y \right] \\
    &\leq \max\{\mu(X), \mu(Z), \mu(Y)\}. 
\end{align}
The above equality is achievable. 
With the definition of spectral gap, we have 
\begin{align}
    \nu(\Omega_{{\rm DC}, \mu}^{\rm XYZ} ) 
    &= 1 - \max_{\ket{\Psi^\bot}}\; \bra{\Psi^\bot}\Omega_{{\rm DC}, \mu}^{\rm XYZ}\ket{\Psi^\bot} \\
    &= 1 - \max\{\mu(X), \mu(Z), \mu(Y)\}. 
\end{align}
Therefore, we obtain the optimal performance of verification strategy when $\mu(X) = \mu(Z) = \mu(Y) = 1/3$, i.e., sampling uniformly. 

\end{document}